
\documentclass[preprint,12pt]{elsarticle}



\usepackage{graphicx}



\usepackage[tight,footnotesize]{subfigure}

%
\usepackage[cmex10]{amsmath}
\usepackage{amsthm,amssymb}

%
\usepackage{algorithm}
\usepackage{algorithmic}

%
\usepackage{array}
\usepackage{mdwmath}
\usepackage{mdwtab}
\usepackage{eqparbox}

%
\usepackage{url}

\newtheorem{lemma}{Lemma}
\newtheorem{theorem}{Theorem}




\begin{document}

\begin{frontmatter}



\title{Efficient Data Collection in Multimedia Vehicular Sensing Platforms}


\author[raf]{Raffaele Bruno}
\author[madda]{Maddalena Nurchis}

\address[raf]{Institute of Informatics and Telematics (IIT)\\
Italian National Research Council (CNR)\\
Via G. Moruzzi 1, Pisa, ITALY\\
E-mail: r.bruno@iit.cnr.it}

\address[madda]{Converging Networks Laboratory\\
VTT Technical Research Centre of Finland\\
Kaitov$\ddot{a}$yl$\ddot{a}$ 1, 90590 Oulu, Finland\\
E-mail: maddalena.nurchis@gmail.com}

\begin{abstract}
Vehicles provide an ideal platform for urban sensing applications, as they can be equipped with all kinds of sensing devices that can continuously monitor the environment around the travelling vehicle. In this work we are particularly concerned with the use of vehicles as building blocks of a multimedia mobile sensor system able to capture camera snapshots of the streets to support traffic monitoring and urban surveillance tasks. However, cameras are high data-rate sensors while wireless infrastructures used for vehicular communications may face performance constraints. Thus, data redundancy mitigation is of paramount importance in such systems. To address this issue in this paper we exploit sub-modular optimisation techniques to design efficient and robust data collection schemes for multimedia vehicular sensor networks. We also explore an alternative approach for data collection that operates on longer time scales and relies only on localised decisions rather than centralised computations. We use network simulations with realistic vehicular mobility patterns to verify the performance gains of our proposed schemes compared to a baseline solution that ignores data redundancy. Simulation results show that our data collection techniques can ensure a more accurate coverage of the road network while significantly reducing the amount of transferred data. 
\end{abstract}

\begin{keyword}

vehicular sensor networks \sep urban surveillance \sep submodular optimization \sep performance evaluation.


\end{keyword}

\end{frontmatter}

%
\section{Introduction\label{sec:introduction}}
\noindent
The idea of using sensors embedded in our vehicles to implement urban monitoring applications is not novel. There are already a few systems, such as Cartel~\cite{HullBZCGMSBM06} or MobiEyes~\cite{LeeMGBC09}, which are designed to support distributed collection and search of sensed data while coping with the intermittent and variable network connectivity of vehicular environments. Most common sensors that are today available on our cars include GPS, vibration sensors, accelerometers, acoustic detectors, proximity sensors, and chemical spill detectors. Thus, typical applications for vehicular sensors networks are road surface monitoring, traffic congestion estimation, and vehicular safety warning services~\cite{LeeG10}. However, a new application paradigm that is emerging for vehicular sensor networks is mobile urban surveillance~\cite{LiHYSLW11,BrunoN13}. The basic idea behind this concept is to take advantage of the vehicles that have forward and/or backward facing cameras to periodically capture camera snapshots (or even short videos) of streets. Then, vehicles can temporarily keep the images recorded by their cameras in a local data storage. The stored images can be delivered to remote data collectors for further processing by opportunistically using 802.11-based roadside access points (APs) that are encountered while travelling~\cite{HullBZCGMSBM06,ErikssonGH08}. Such network of car-based mobile cameras could effectively complement fixed cameras that are currently deployed in our cities (e.g., on traffic lights or public lighting systems) as part of the city traffic control service~\cite{LeeG10}. 

A number of technical challenges should be addressed to effectively utilise a multimedia vehicular sensor network for urban surveillance applications. One of the most important issues is that \emph{the amount of data that can be delivered over the roadside wireless access network is limited} while cameras are high data-rate sensors that require significantly more communication bandwidth than classical sensors (e.g., GPS coordinates). Not only is this due to insufficient capacity of wireless links (e.g., maximum 54~Mbps for IEEE 802.11a/g/p technologies~\cite{std802.11p} but available effective bandwidth is typically much lower~\cite{BychkovskyHMBM06,HadallerKBA07}), but it is also a consequence of the limited number of roadside APs that can be deployed in an urban area for information dissemination. Thus, only intermittent and sporadic connectivity to roadside APs can be typically provided to vehicles in urban areas~\cite{ErikssonBM08}. Furthermore, many images captured by different vehicles that travel along similar routes will suffer from \emph{spatial and temporal correlation that cause data redundancy}. In general, the degree of data redundancy depends on various factors that are difficult to predict or control, such as mobility patterns, available storage space at each vehicle, and network congestion around each roadside AP. Note that due to vehicle mobility it is also difficult to coordinate or schedule monitoring tasks among multiple vehicles. As a consequence, it is of paramount importance to design data collection schemes that minimise the amount of data that needs to be transferred from vehicles to data collectors without: $i)$ degrading the quality of the reconstructed road scene, and $ii)$ violating the latency requirements that are imposed by the urban surveillance applications.

To address the technical issues described above this paper makes the following contributions:
\begin{list}{\tiny$\bullet$}{\leftmargin=1em \itemindent=-0.2em}
\item We define suitable coverage metrics to measure the quality of the road scene reconstruction. Then, we exploit these metrics to formulate the \emph{optimal data collection problem under network capacity constraints as a class of submodular set-covering problems}. Finally, we take advantage of submodularity to develop approximated solutions of the optimisation problems. 
\item We also explore an alternative \emph{probabilistic scheme} that exploits basic aggregate information on the spatio-temporal distribution of received images, and relies only on localised image selection rather than centralised computations to mitigate data redundancy. 
\item We evaluate the performance of our proposed schemes against a naive strategy, in which vehicles attempt to transmit all stored images to the data collector. Results obtained using realistic vehicular mobility patterns in a wide range of different scenarios show that our data collection techniques ensure a more accurate coverage of the road network while significantly reducing the amount of transferred data.
\end{list}
The rest of this paper is organized as follows. Section~\ref{sec:related} outlines related work. In Section~\ref{sec:arch} we describe the architecture of the system considered in this study. In Section~\ref{sec:data_collection} we formulate two submodular optimisation problems for road network coverage and we develop efficient approximations for solving them. Section~\ref{sec:decentralized_solution} presents an alternative probability-based data collection scheme. Section~\ref{sec:performance} describes the simulation environment and explains the simulations results. Section~\ref{sec:conclusions} concludes the paper with final remarks.
%
%
%
\section{Related Work\label{sec:related}}
\noindent
Aspects of this work are related to three main research themes that we discuss in the following. 
%
%
%
%
\subsection{Vehicular sensor networks\label{sec:vsn}}
\noindent
Recently many studies have focused on investigating how to exploit existing sensors that are embedded in vehicles for a plethora of urban sensing applications, and a comprehensive survey on the challenges and opportunities of vehicular sensor networks can be found in~\cite{LeeG10}. To the best of our knowledge one of the first systems that was designed to support data collection from car-based sensors is CarTel~\cite{HullBZCGMSBM06}. CarTel relies on a centralised server, or portal, which is hosted in the wired Internet. Applications running on the portal issue queries about specific sensor data. Then, the CarTel platform dispatches those queries to vehicles and sends replies to the portal while dealing with intermittent connectivity. A somehow similar approach is proposed in~\cite{ErikssonGHNMB08} to specifically monitor the surface conditions of roads by using GPS-enabled smartphones carried by drivers. An alternative system, called MobEyes, is developed in~\cite{LeeMGBC09} to support urban monitoring applications with vehicular sensor networks. MobEyes does not require any fixed infrastructure but it uses mobile agents to opportunistically diffuse sensed data summaries among neighbour vehicles and to create a low-cost index to query monitoring data. Thus, the focus of both CarTel and MobEyes is on distributed opportunistic search of sensed data. 
%
%
%
%
\subsection{Cameras on moving vehicles\label{sec:mvc}}
\noindent
Previous papers have introduced the idea of using cameras in moving vehicles for monitoring tasks~\cite{GreenhillV06,GreenhillV07,LiHYSLW11}. Specifically, both~\cite{GreenhillV06} and~\cite{GreenhillV07} propose to use cameras on buses to assist conventional wide-area video surveillance systems based on fixed cameras. Then, collected images are delivered without interruption to remote data collectors once a day when the buses return to their depot. The focus of those studies is on the development of methods and tools for querying, organising, and transforming images collected from multiple streams acquired from a network of mobile cameras. A distributed and cooperative storage system, called VStore, is developed in~\cite{LiHYSLW11} to support data redundancy elimination and to balance data storage in a multimedia vehicular sensor network used to assist forensic investigations of events, such as traffic accidents. Our work differs from previous studies because we are not concerned with multimedia information processing and storage. On the contrary, we focus on the efficiency and robustness of real-time image data collection when using a roadside wireless infrastructure with limited available bandwidth. For instance, in~\cite{GreenhillV06,GreenhillV07} upload sessions are assumed to last 10 to 12 hours, which allow to transfer hundreds of MB per bus even if wireless communication links are used. Furthermore, most of existing systems that use car-based cameras assume that data collectors operate off-line. On the contrary, in this work we devise an online system in which vehicles continuously deliver the stored data to remote data collectors by opportunistically using the available roadside wireless infrastructure. In this case, upload sessions can last at most a few tens of seconds, and only a few MB of data can be transferred in general due to the harshness of the physical environment~\cite{BychkovskyHMBM06,HadallerKBA07}. Note that preliminary results of this work have been reported in~\cite{BrunoN13}.
%
%
%
%
\subsection{Vehicular Internet access\label{sec:internet}}
\noindent
Various techniques have been proposed to deliver data to and from moving vehicles in a reliable manner using roadside APs. One of the most popular systems is Cabernet~\cite{ErikssonBM08}. Intermittent connectivity between encountered roadside AP causes several challenges, including high connection establishment latency and high loss rates. Cabertnet addresses these issues by using transport protocols and connection establishment procedures that are optimised for the vehicular environment. The Drive-Through Internet~\cite{OttK04}, and Infostations~\cite{SmallH03} projects propose architectures similar to Cabernet, although they are less concerned with network performance issues. More recently, a few studies~\cite{TrullolsFCCB10,ZhengLSK10} have addressed the design of heuristics for the optimal placement of roadside APs with the aim of improving the performance of vehicular Internet access. In this study we simply assume that data delivery occurs when a car associate to one roadside AP encountered during travel and any of the existing systems could be used to improve the performance of vehicle-to-infrastructure communications. 
%
%
%
\section{System Architecture\label{sec:arch}}
\noindent
The system architecture we assume in this study is similar to the one considered in previous works in this field~\cite{LeeMGBC09,LiHYSLW11,GreenhillV06}. More specifically, we assume that $n$ vehicles are moving around in a road network and each vehicle is equipped with: GPS, one forward facing camera, digital maps, and an on-board unit (OBUs) enabling wireless communications with roadside APs, as well as data storage and processing. Then, each vehicle periodically captures images of the road segment ahead using its front camera. In general, the sampling rate of the on-board camera can be variable and tuned according to vehicle movements and application requirements. For instance, the slower is the vehicle and the lower should be the sampling rate to avoid that successive images are covering the same road segment~\cite{GreenhillV07}. Furthermore, the camera has a limited \emph{depth of field}, i.e., the range of distances at which the picture appears acceptably sharp. For the sake of simplicity, we assume that the maximum depth of field is constant and equal to $\delta$~meters. 

Using the GPS receiver the vehicle can determine the location where an image is captured and the time, called timestamp, at which it is recorded. In general, a list of attributes (also tags) can be assigned to each image besides location and time information, such as the identity of the vehicle that generated this image. An essential attribute is also the \emph{validity time} $\tau$, which allows to establish when two images can be regarded as different~\cite{LiHYSLW11}. More precisely, two images can be regarded as same if the difference between their timestamps is smaller than $\tau$. Note that $\tau$ is a system parameter that depends on the application requirements and the variability of the environment to be monitored. For instance, surveillance systems need road images at finer time granularity than classical traffic monitoring systems. Then, a vehicle can avoid storing images that refer to the same road segment if they are equivalent (i.e., their timestamps differ for less than $\tau$~seconds).

In this work we foresee a system in which each vehicle maintains the images that are captured by its mounted camera in a local data storage until it gets connected to a roadside AP (or there is available memory space). Then, the stored images are uploaded using wireless communications to a remote data collector hosted in the wired Internet, in which they can be processed and analysed. In a naive solution each vehicle should try to upload all stored images at the maximum data rate that is allowed by the wireless channel. However, as already observed in Section~\ref{sec:introduction} vehicles traveling similar routes may have many redundant images, and only few of them are needed to reconstruct the road scene. Furthermore, the channel bandwidth is limited and if many vehicles simultaneously try to upload a large number of images congestion will necessarily occur on the wireless medium. Finally, depending on vehicular mobility patterns it is possible that some vehicles have less opportunities to upload their data and they may rapidly utilise their entire storage while other vehicles have a lower utilisation of their buffers. To address those issues we assume that whenever a vehicle associates to a roadside AP it sends to the data collector a \emph{data summary message} with the tags of all its stored images. Then, the data collector is responsible for selecting the minimum set of images that it needs to accurately reconstruct the road scene. It is straightforward to observe that this centralised approach may suffer from scalability issues. Therefore, efficient and robust algorithms are needed to cope with potentially large amounts of data. Furthermore, it is essential to develop algorithms that are able to quickly decide which images should be transferred and which should be discarded because the connection duration between a moving vehicles and a roadside AP is typically short~\cite{BychkovskyHMBM06}.
%
%
%
%
\section{Optimal Data Collection: Problem Formulation and Approximations\label{sec:data_collection}}
\noindent
Ideally we would like to collect the minimum number of images that is needed to reconstruct a road scene with sufficient accuracy in terms of some intuitive metric. To this end, in the following we first introduce the coverage metrics that are used to characterise the quality of the road scene reconstruction. Then, we formulate two distinct optimal data collection problems and we develop approximated solutions of such problems.   
%
%
\subsection{System model\label{sec:system_model}}
\noindent
We model the road network by using the same formalism as in~\cite{ZhengLSK10,TrullolsFCCB10}. Basically a road network is a connected geometric graph $G \!=\! (V,E) $, in which vertices in set $V$ represent the \emph{road intersections} and edges in set $E$ represent the \emph{road segments} connecting road intersections. Without loss of generality we assume straight roads\footnote{The extension to curved road is straightforward as they can be approximated as a sequence of straight lines connecting virtual road intersections~\cite{ZhengLSK10}.}. Furthermore, each edge $e \!\in\! E$ is labeled with a weight $d_{e}$, which represents the physical length of the corresponding road segment. 

Now, let us assume that the remote data collector has received at time $t$ a number of data summaries from the vehicles that get associated to the various roadside APs deployed along the road network. Let $\mathcal{I}_{n}$ the set of all image attributes that are carried in the received summaries\footnote{For notation simplicity variable $t$ is omitted.}. Furthermore, let us denote with $\mathcal{I}_{o}$ the set of all attributes for the images that are stored at the data collector at time $t$. It is important to point out that $\mathcal{I}_{o}$ changes over time because the data collector can decide to delete older images if needed, e.g., to free storage space or, more likely, because the monitoring application sets a delay constraint for image freshness. We now define an intuitive performance metric for quantifying the redundancy degree of the images in set $\mathcal{I}_{n}$ with respect to the images in set $\mathcal{I}_{o}$. This metric is needed to perform an informed decision on which images in $\mathcal{I}_{n}$ are the most valuable for reconstructing the road scene. A straightforward solution to this problem would be to consider which are the road portions covered by each picture and to collect only images than have a minimal overlap with the portions of road segments that are already covered by the images that are stored in the data collector. However, the \emph{spatial distribution of recorded images provides only a partial description of the road coverage problem}. Indeed, the main shortcoming of this approach is that it does not take into account the time validity $\tau$ of each picture as defined in Section~\ref{sec:arch}. In other words, since the road scene may be assumed to change slowly each picture describes a road scene not only at a single time instant but also for a short time interval. Thus, we should also consider the \emph{temporal distribution} of images when computing the quality of the road network coverage. 

\begin{figure}[tb]
\centering
\includegraphics[width=0.5\textwidth,clip=true,angle=-90]{./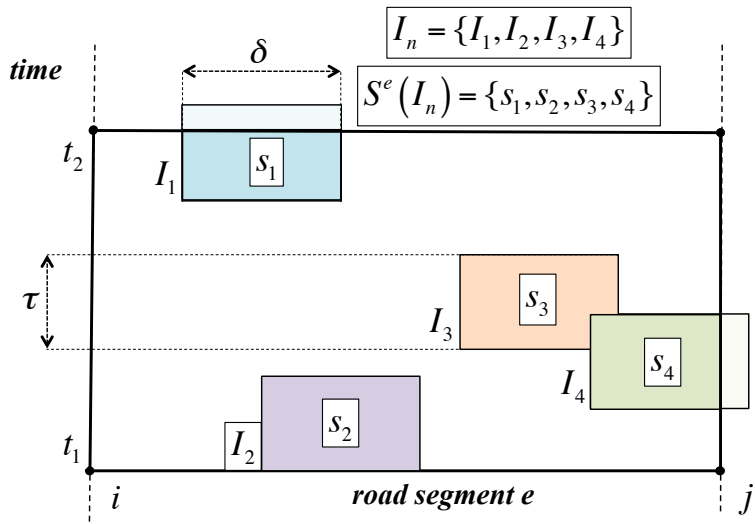}
\caption{Spatio-temporal distribution of images in $\mathcal{I}_{n}$ over road segment $e$ and for time interval $[t_{1},t_{2}]$ (for the sake of figure clarity we assume that $\mathcal{I}_{o} \!=\! \emptyset$). Note that images can cover more than one road segment (as for $I_{4}$) or be valid beyond the time interval $[t_{1},t_{2}]$ (as for $I_{1}$).\label{fig:temporal_dist}}
\vspace{-0.3cm}
\end{figure}
To clarify the above concept, let us assume that four new images are available at time $t$, i.e., $\mathcal{I}_{n} \!=\! \{I_{1},I_{2},I_{3},I_{4}\}$. Then, in Figure~\ref{fig:temporal_dist} we illustrate the spatio-temporal distribution of such images over road segment $e$. Specifically, we associate to each road segment between intersections $i$ and $j$ ($i,j \!\in\! V$) a two-dimensional space in which the $x$-axis represents the distance from $i$ (i.e., $0 \!\le\! x \!\le\! d_{e}$) and the $y$-axis represents the time in the interval $[t_{1},t_{2}]$. In this time~vs.~distance graph an image is characterised by a rectangular area with horizontal side equal to $\delta$ and vertical side equal to $\tau$. The boundaries of such areas partition the two-dimensional space into smaller regions that we call \emph{subareas}. For instance in Figure~\ref{fig:temporal_dist} there are four subareas $\{s_{1},s_{2},s_{3},s_{4}\}$, which are illustrated using different colours. An essential property of subareas is that they do not overlap in time nor in space. More formally, we denote with $\mathcal{S}^{e}(\mathcal{X})$ the set of subareas generated by images in set $\mathcal{X}$ for road segment $e$. For each $s \!\in\! \mathcal{S}^{e}(\mathcal{X})$ we can compute $\theta_{s}$ defined as the size of the correspondent sub-area. Clearly it holds that $\theta_{s} \!\le\! \delta \times \tau$, $\forall s \!\in\! \mathcal{S}^{e}(\mathcal{X})$. 

To finalise the definition of the coverage metric we also need to specify the time horizon over which the network coverage should be measured. To this end we assume that the monitoring application regards as not useful images that are older than a time $T$\footnote{We implicitly assume that the data collector does not store images that are older than $T$~seconds because they do not satisfy the delay requirement set by the monitoring application.}. Note that the system parameter $T$ depends on the specific application requirements, and such latency requirement can be of the order of minutes in non-critical monitoring applications~\cite{LiHYSLW11}\footnote{We can safely assume that $T \!>>\! \tau$.}. Now it is straightforward to define the \emph{road coverage gain} for a road segment $e$ during the time interval $T$ when images in set $\mathcal{X} \! \subseteq \! \mathcal{I}_{n}$ are transferred to the data collector, denoted as $\gamma^{e}(\mathcal{X},\mathcal{I}_{o},T)$, as the fraction of the road segment $e$ that is additionally covered by images in set $\mathcal{X}$. Formally
\begin{equation}
\gamma^{e}(\mathcal{X},\mathcal{I}_{o},T) \!=\! \frac{\sum_{s \in \mathcal{S}^{e}(\mathcal{X} \cup \mathcal{I}_{o})} \theta_{s} - \sum_{s \in \mathcal{S}^{e}(\mathcal{I}_{o})} \theta_{s}}{ d_{e} \times T} \; . \label{eq:gamma_e}
\end{equation}
In other words Equation~(\ref{eq:gamma_e}) quantifies the extent of coverage improvement for a specific road segment if a subset $\mathcal{X}$ of new images is added to the images that are already stored at the data collector. Furthermore, Equation~(\ref{eq:gamma_e}) can be easily generalised to measure the \emph{network coverage gain} for the entire road network, denoted as $\gamma(\mathcal{X},\mathcal{I}_{o},T)$, as follows
\begin{equation}
\gamma(\mathcal{X},\mathcal{I}_{o},T) \!=\! \sum_{e\in E} \gamma^{e}(\mathcal{X},\mathcal{I}_{o},T) \; . \label{eq:gamma}
\end{equation}
%
%
%
%
\subsection{Problem statement\label{sec:problem_statement}}
\noindent
We assume that the sets $\mathcal{I}_{n}$ and $\mathcal{I}_{o}$ are given as part of the input to the data collection decision process. Given a constraint $k$ on the maximum number of images that can be collected, we are looking for a subset $\mathcal{X}_{OPT} \!\subseteq\! \mathcal{I}_{n}$ that maximises the network coverage gain. Formally, the optimisation problem can be expressed as follows: 
\begin{align}
\label{eq:opt1}
\text{\bf OPT1 } \quad & \\ \nonumber
\text{maximize } \quad &\gamma(\mathcal{X},\mathcal{I}_{o},T) \\ \nonumber
\text{subject to:} \quad & w( \mathcal{X}) \le k \; \\ \nonumber
 & \mathcal{X} \!\subseteq\! \mathcal{I}_{n} \; , \nonumber
\end{align}
where $w(\mathcal{X})$ is the cardinality of the set $\mathcal{X}$, i.e., the number of images that are requested from the set $\mathcal{I}_{n}$. It is important to note that constraint $k$ represents the \emph{maximum upload capacity} (measured in number of images) of the system. In the simplest case we could assume that $k$ is set as a fraction of the memory space locally available at each vehicle. For instance, the system may require that only a small fraction of stored images are transferred to the data collector to reduce network capacity utilisation. In a more advanced scenario we can assume that the data collector knows the average number of vehicles that are simultaneously associated to each roadside AP from the logs of those APs. Then, $k$ can be set as a conservative estimate of the maximum number of images that a new vehicle associated to that roadside AP could transmit when sharing the channel bandwidth with other transmitters. 

The problem OPT1 provides a worst-case average guarantee on the quality of the road network coverage. However, depending on the traffic conditions, the patterns of mobile users, and the specific deployment of roadside APs it may occur that specific road segments have a worse coverage than others. In order to improve the coverage fairness among road segments we should ensure that the minimum coverage quality over all road segments is maximised. Formally, the optimisation problem becomes 
\begin{align}
\label{eq:opt2}
\text{\bf OPT2 } \quad & \\ \nonumber
\text{maximize } \quad &\min\limits_{e \in E} \gamma^{e}(\mathcal{X},\mathcal{I}_{o},T) \\ \nonumber
\text{subject to:} \quad & w(\mathcal{X}) \le k \; \\ \nonumber
 & \mathcal{X} \!\subseteq\! \mathcal{I}_{n} \; . \nonumber
\end{align}
It is useful to point out that both OPT1 and OPT2 problems satisfy the same capacity constraint, i.e., less than $k$ images should be transferred at the data collector. From a practical perspective the main difference between the above two optimisation problems consists in the rules used to decide which images to select in the set $\mathcal{X}$. As explained in the following section this has a significant impact on the complexity of each problem. 
%
%
%
\subsection{Approximations via submodular techniques\label{sec:approx}}
\noindent
It is straightforward to observe that problems in~(\ref{eq:opt1}) and~(\ref{eq:opt2}) are variants of classical set covering problems in which we want to ensure that an objective function of the subareas generated by collected images has a maximal value given that we can select a fixed number of images from a given set. It is a well-known result that set covering problems are NP-hard and they do not have polynomial time approximation algorithms under standard assumptions. However, in the following we present an efficient approximation method to solve the above optimisation problems by utilising \emph{submodular optimisation techniques}~\cite{Wolsey82,KrauseMGG08}.

The first step is to demonstrate a few important properties of the set function $\gamma^{e}(\mathcal{X},\mathcal{I}_{o},T)$, which will allow us to develop simple and efficient approximations for the problems formalised in~(\ref{eq:opt1}) and~(\ref{eq:opt2}). The following Lemma summarises these properties.
\begin{lemma}
\label{lem:submodularity}
The set function $\gamma^{e}(\mathcal{A},\mathcal{C},T)$ is:
\begin{enumerate}
\item \emph{normalized}, i.e., $ \gamma^{e}(\emptyset,\mathcal{C},T)$;
\item \emph{nondecreasing}, i.e., 
\begin{equation*}
\gamma^{e}(\mathcal{L},\mathcal{C},T)  \le \gamma^{e}(\mathcal{H},\mathcal{C},T) \textrm{ for all } \mathcal{L}\!\subseteq\! \mathcal{H} \!\subseteq\! \mathcal{A}; 
\end{equation*}
\item \emph{submodular}, i.e.,
\begin{align*}
\gamma^{e}(\mathcal{L} \!\cup\! \{a\},\mathcal{C},T) &\!-\!\gamma^{e}(\mathcal{L},\mathcal{C},T) \!\ge\!  \\
 & \gamma^{e}(\mathcal{H} \!\cup\! \{a\},\mathcal{C},T) \!-\!\gamma^{e}(\mathcal{H},\mathcal{C},T)
\end{align*}
for all $\mathcal{L} \!\subseteq\! \mathcal{H}  \!\subseteq\! \mathcal{A} \textrm{ and } a \!\in\! \mathcal{A}  \!\setminus\! \mathcal{H} ;$ 

\end{enumerate}
\end{lemma}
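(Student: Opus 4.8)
The plan is to recognize $\gamma^{e}$ as a positively scaled, offset version of a \emph{coverage function} and then to read off the three properties from the standard behaviour of such functions. The key observation is that the subareas in $\mathcal{S}^{e}(\mathcal{X})$ partition exactly the region of the time-versus-distance window of segment $e$ that is covered by the rectangles associated with the images in $\mathcal{X}$. Consequently $\sum_{s \in \mathcal{S}^{e}(\mathcal{X})} \theta_{s}$ equals the area (Lebesgue measure) of the union $\bigcup_{I \in \mathcal{X}} R_{I}$, where $R_{I}$ denotes the portion of the $\delta \times \tau$ rectangle of image $I$ that falls within the distance-time window of segment $e$. Writing $f(\mathcal{X}) := \left| \bigcup_{I \in \mathcal{X}} R_{I} \right|$, the numerator of~(\ref{eq:gamma_e}) becomes $f(\mathcal{X} \cup \mathcal{C}) - f(\mathcal{C})$, so that $\gamma^{e}(\mathcal{X},\mathcal{C},T) = (d_{e} T)^{-1}\,[f(\mathcal{X} \cup \mathcal{C}) - f(\mathcal{C})]$. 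Since $d_{e} T > 0$, every property transfers directly from the corresponding property of the offset coverage function $g(\mathcal{X}) := f(\mathcal{X} \cup \mathcal{C}) - f(\mathcal{C})$.

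With this reformulation in hand, \emph{normalization} is immediate, since $g(\emptyset) = f(\mathcal{C}) - f(\mathcal{C}) = 0$. For the \emph{nondecreasing} property I would note that $\mathcal{L} \subseteq \mathcal{H}$ implies $\mathcal{L} \cup \mathcal{C} \subseteq \mathcal{H} \cup \mathcal{C}$, hence $\bigcup_{I \in \mathcal{L} \cup \mathcal{C}} R_{I} \subseteq \bigcup_{I \in \mathcal{H} \cup \mathcal{C}} R_{I}$; as the area measure is monotone under set inclusion, $f(\mathcal{L} \cup \mathcal{C}) \le f(\mathcal{H} \cup \mathcal{C})$, and therefore $g(\mathcal{L}) \le g(\mathcal{H})$.

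For \emph{submodularity} I would use the diminishing-returns formulation already stated in the lemma. The marginal gain of adding an image $a$ to a collection $\mathcal{L}$ is the area of the still-uncovered part of $R_{a}$, namely

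\begin{equation*}
g(\mathcal{L} \cup \{a\}) - g(\mathcal{L}) = \left| R_{a} \setminus \bigcup_{I \in \mathcal{L} \cup \mathcal{C}} R_{I} \right| \; .
\end{equation*}

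Because $\mathcal{L} \subseteq \mathcal{H}$ yields $\bigcup_{I \in \mathcal{L} \cup \mathcal{C}} R_{I} \subseteq \bigcup_{I \in \mathcal{H} \cup \mathcal{C}} R_{I}$, the uncovered part of $R_{a}$ can only shrink as we pass from $\mathcal{L}$ to $\mathcal{H}$, i.e.\ $R_{a} \setminus \bigcup_{I \in \mathcal{H} \cup \mathcal{C}} R_{I} \subseteq R_{a} \setminus \bigcup_{I \in \mathcal{L} \cup \mathcal{C}} R_{I}$. Applying monotonicity of the measure to these two nested sets gives exactly the required inequality $g(\mathcal{L} \cup \{a\}) - g(\mathcal{L}) \ge g(\mathcal{H} \cup \{a\}) - g(\mathcal{H})$, and dividing by $d_{e} T$ concludes.

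The step I expect to be the main obstacle is the first one: rigorously justifying that the sum of subarea sizes coincides with the measure of the union of image rectangles. This requires defining $R_{I}$ as the intersection of the image's $\delta \times \tau$ rectangle with the relevant distance-time window of segment $e$ — which is what correctly handles the images that span several road segments or whose validity extends beyond $[t_{1},t_{2}]$ (as illustrated by $I_{4}$ and $I_{1}$ in Figure~\ref{fig:temporal_dist}) — and verifying that the partition into subareas is a genuine partition of the covered region, with no overlap and no double counting. Everything after this identification is routine, since it reduces to the textbook fact that the measure of a union of sets is a normalized, nondecreasing, submodular set function, a property preserved under the offset by the fixed set $\mathcal{C}$ and under scaling by the positive constant $d_{e} T$.
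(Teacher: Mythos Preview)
Your proof is correct and follows the same underlying idea as the paper: both arguments exploit that $\gamma^{e}$ is, up to a positive scalar and an additive offset by the fixed set $\mathcal{C}$, a coverage function, and that the marginal contribution of a new image $a$ is the area of the portion of its rectangle not already covered by the current collection, a quantity that can only shrink when the collection grows.

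The difference is one of formalism rather than substance. The paper works directly with the subarea sets $\mathcal{S}^{e}(\cdot)$ and argues that $\mathcal{L}\subseteq\mathcal{H}$ implies $\mathcal{S}^{e}(\mathcal{L})\subseteq\mathcal{S}^{e}(\mathcal{H})$, then intersects with $\mathcal{S}^{e}(a)$; it also simplifies to $\mathcal{C}=\emptyset$. You instead pass to the Lebesgue measure of the union $\bigcup_{I}R_{I}$ and use monotonicity of measure and set complements, keeping $\mathcal{C}$ general throughout. Your route is somewhat cleaner: the paper's inclusion $\mathcal{S}^{e}(\mathcal{L})\subseteq\mathcal{S}^{e}(\mathcal{H})$ is a little loose as written, since the subarea partition is refined (not merely enlarged) when more images are added, and what is really meant is containment of the covered regions---which is exactly your formulation. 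The step you flag as the main obstacle, identifying $\sum_{s}\theta_{s}$ with the area of the union of clipped rectangles, is precisely the bridge between the two presentations; once that is in place, the two arguments are equivalent.
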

\begin{proof}
The first property is obvious. The second property holds because if $\mathcal{H}$ is a bigger set than $\mathcal{L}$ then it must necessarily cover an area that is at least equal or greater than the area covered by $\mathcal{L}$. Now, without loss of generality but for notation simplicity let us assume that $\mathcal{C} \!=\! \emptyset$ when proving the third property. It is straightforward to observe that the two terms in the inequality that defines the submodular property can be rewritten as follows 
\begin{equation*}
\gamma^{e}(\mathcal{L} \!\cup\! \{a\},\mathcal{C},T) \!-\!\gamma^{e}(\mathcal{L},\mathcal{C},T) \!=\! \frac{\sum_{s \in \big \{ \mathcal{S}^{e}(a) -  \mathcal{S}^{e}(a) \cap \mathcal{S}^{e}(\mathcal{L}) \big \}} \theta_{s}}{ d_{e} \times T}
\end{equation*}
and that 
\begin{equation*}
\gamma^{e}(\mathcal{H} \!\cup\! \{a\},\mathcal{C},T) \!-\!\gamma^{e}(\mathcal{H},\mathcal{C},T) \!=\! \frac{\sum_{s \in \big \{ \mathcal{S}^{e}(a) -  \mathcal{S}^{e}(a) \cap \mathcal{S}^{e}(\mathcal{H}) \big \}} \theta_{s}}{ d_{e} \times T}
\end{equation*}
Now the proof of the third property reduces to demonstrate that 
\begin{equation}
\mathcal{S}^{e}(a) -  \mathcal{S}^{e}(a) \cap \mathcal{S}^{e}(\mathcal{L}) \supseteq \mathcal{S}^{e}(a) -  \mathcal{S}^{e}(a) \cap \mathcal{S}^{e}(\mathcal{H}) \; ,
\end{equation}
which is equivalent to demonstrate that
\begin{equation}
\label{eq:ineq}
 \mathcal{S}^{e}(a) \cap \mathcal{S}^{e}(\mathcal{L}) \subseteq \mathcal{S}^{e}(a) \cap \mathcal{S}^{e}(\mathcal{H}) \; .
\end{equation}
We prove that inequality~(\ref{eq:ineq}) holds by observing that $ \mathcal{L}\!\subseteq\! \mathcal{H} \!\subseteq\! \mathcal{A}$ implies $ \mathcal{S}^{e}(\mathcal{L}) \!\subseteq\! \mathcal{S}^{e}(\mathcal{H}) \!\subseteq\! \mathcal{S}^{e}(\mathcal{A})$. Therefore, since $\mathcal{S}^{e}(\mathcal{H})$ includes $\mathcal{S}^{e}(\mathcal{L})$ their intersection with the same set $\mathcal{S}^{e}(a)$ will preserve this relationship.  
\end{proof}
As a corollary of Lemma~\ref{lem:submodularity} it is also straightforward to demonstrate the submodularity of the road network coverage $\gamma(\mathcal{X},\mathcal{I}_{o},T)$ since the sum of submodular set functions is also submodular~\cite{NemhauserWF78}. To better explain the physical implications of the submodularity property proven in the above Lemma it is useful to note that submodularity is a type of \emph{diminishing returns property} for a set function~\cite{NemhauserWF78}. Basically, the diminishing returns effect implies that by adding a new image to the set of images already stored at the data collector we increase the quality of the road network coverage more if the data collector has only a few images, and less if it has already collected many images. This behaviour is fundamental to develop simple and efficient approximated solutions of the OPT1 and OPT2 problems. 

First of all, let us start with solving problem OPT1, which only requires the maximisation of a submodular function under a capacity constraint. As shown in~\cite{Wolsey82} the maximisation of a submodular function is amenable to an efficient greedy approximation, which is described in Algorithm~\ref{algo:greedy}. Specifically, Algorithm~\ref{algo:greedy} starts with an empty set and at each iteration it adds a new image taken from set $\mathcal{I}_{n}$ that maximises the normalised increment of the function $\gamma(\mathcal{X},\mathcal{I}_{o},T) $. The algorithm stops when: $i)$ adding a new image does not improve the road network coverage, or $ii)$ the maximum number of images has been selected.  
\begin{algorithm}[!tb]
\small
\caption{Maximum road network coverage}
\label{algo:greedy}
\begin{algorithmic}[1]                
\REQUIRE Input: $\mathcal{I}_{n}$, $\mathcal{I}_{o}$, $T$, $G(V,N)$, $k$,
\ENSURE $\mathcal{X}_{G} \subseteq \mathcal{I}_{n}$ 
\STATE $\mathcal{X}_{G} \leftarrow \emptyset$
\WHILE{$\gamma(\mathcal{X}_{G},\mathcal{I}_{o},T) < \gamma(\mathcal{I}_{n},\mathcal{I}_{o},T) $ \textbf{and} $w(\mathcal{X}_{G})< k$} \label{ln:while}
\STATE Find $a \in \mathcal{I}_{n} \!\setminus\! \mathcal{X}_{G}$ that maximizes $\big \{\gamma(\mathcal{X}_{G} \cup \{a\},\mathcal{I}_{o},T) -  \gamma(\mathcal{X}_{G},\mathcal{I}_{o},T) \big \}$
\STATE  $\mathcal{X}_{G} \leftarrow \mathcal{X}_{G} \cup \{a\}$ \label{ln:endwhile}
\ENDWHILE
\end{algorithmic}
\end{algorithm}
The quality of the approximation provided by a greedy algorithm in the case of submodular set covering problems was investigated in~\cite{NemhauserWF78}, and the following strong approximation guarantees were theoretically established. 
\begin{theorem}
\label{theo:greedy}
Let $\mathcal{C}$ be a finite set and let $F(\mathcal{A})$ be a sub modular set function defined over $\mathcal{C}$. Let $\mathcal{A}_{opt}$ be the solution of the problem $\max_{\mathcal{A} \subseteq \mathcal{C}} \{ F(\mathcal{A}) : |\mathcal{A}|<k \}$. Then, the set $\mathcal{A}_{g}$ obtained by a ``greedy'' search algorithm always produces a solution whose values is at least $(1 - 1/e)$ times the optimal value. Furthermore, there are not other polynomial time algorithms that can ensure a better approximation guarantee.
\end{theorem}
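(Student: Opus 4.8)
The plan is to prove the $(1-1/e)$ guarantee by tracking, iteration by iteration, how quickly the greedy value closes the gap to the optimum, and then to treat the matching lower bound separately. I would note first that the bound relies not only on submodularity but also on the normalization and monotonicity established in Lemma~\ref{lem:submodularity}; I would therefore state explicitly at the outset that $F$ is assumed normalized ($F(\emptyset)=0$) and nondecreasing, since these are exactly the ingredients that make the greedy analysis work. Write $\mathcal{A}_i$ for the greedy set after $i$ selections, and let $\mathcal{A}_{opt}=\{o_1,\dots,o_k\}$ be an optimal solution of cardinality $k$.

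The central step is to derive, for every $i$, the single-step inequality
\[
F(\mathcal{A}_{opt}) - F(\mathcal{A}_i) \le k\,\bigl[F(\mathcal{A}_{i+1}) - F(\mathcal{A}_i)\bigr].
\]
To obtain it I would start from monotonicity, $F(\mathcal{A}_{opt}) \le F(\mathcal{A}_{opt}\cup\mathcal{A}_i)$, and telescope the right-hand side over the elements of the optimum,
\[
F(\mathcal{A}_{opt}\cup\mathcal{A}_i) - F(\mathcal{A}_i) = \sum_{l=1}^{k}\bigl[F(\mathcal{A}_i\cup\{o_1,\dots,o_l\}) - F(\mathcal{A}_i\cup\{o_1,\dots,o_{l-1}\})\bigr].
\]
Submodularity bounds each summand by the marginal gain of adding $o_l$ to $\mathcal{A}_i$ alone, and the greedy selection rule guarantees that this marginal gain is no larger than $F(\mathcal{A}_{i+1})-F(\mathcal{A}_i)$, the best available increment at step $i$. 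Summing the $k$ terms yields the displayed inequality. This is the step where the submodular structure is genuinely exploited, and I expect it to be the main obstacle: the delicate part is getting the decomposition-plus-greedy argument to line up so that every single term is controlled by the same greedy increment.

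The remainder is a routine recursion. Setting $\delta_i = F(\mathcal{A}_{opt}) - F(\mathcal{A}_i)$ and rewriting the inequality as $\delta_i \le k(\delta_i - \delta_{i+1})$ gives $\delta_{i+1} \le (1-1/k)\,\delta_i$. Iterating from $\delta_0 = F(\mathcal{A}_{opt})$, where I use normalization $F(\emptyset)=0$, produces
\[
\delta_k \le (1-1/k)^k\, F(\mathcal{A}_{opt}) \le e^{-1} F(\mathcal{A}_{opt}),
\]
the last step using the elementary bound $(1-1/k)^k \le e^{-1}$. Rearranging $\delta_k \le e^{-1} F(\mathcal{A}_{opt})$ gives exactly $F(\mathcal{A}_g) \ge (1-1/e)\,F(\mathcal{A}_{opt})$, as claimed.

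Finally, for the optimality clause, that no polynomial-time algorithm can guarantee a better factor, I would not reprove it from scratch but instead appeal to the matching lower bounds: in the value-oracle model any algorithm using polynomially many queries cannot exceed $(1-1/e)$~\cite{NemhauserWF78}, and the same factor is tight for the associated covering problems under standard complexity assumptions. Since our objective $\gamma$ is a concrete monotone submodular coverage function, these results transfer directly and establish that the greedy guarantee is best possible.
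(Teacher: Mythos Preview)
Your proof is the standard Nemhauser--Wolsey--Fisher argument and is correct; you are also right to flag that monotonicity and normalization are needed in addition to submodularity. However, the paper does not actually prove this theorem: it is stated as a classical result and attributed to~\cite{NemhauserWF78} (with the greedy approximation framework credited to~\cite{Wolsey82}), and the text moves on immediately to defining \textsc{GreedyI}. So there is no ``paper's own proof'' to compare against---you have supplied what the authors deliberately omitted by citation.
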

For brevity, hereafter we denote as \textsc{GreedyI} the approximated solution of problem OPT1 that is provided by Algorithm~\ref{algo:greedy}.

Now we show how to solve OPT2, which is more complex than OPT1 since it involves a max-min optimisation. To this end we take advantage of a methodology that was first developed in~\cite{KrauseMGG08} for a similar problem formulation, which relies on an efficient bicriterion approximation that can be achieved by relaxing both the requirement on the objective function and that on the capacity constraint. Specifically, the first step in the solution framework proposed in~\cite{KrauseMGG08} is to solve the following variant of the problem~(\ref{eq:opt2}): given a required minimum road segment coverage $\lambda \!\in\! [0,1]$ over all the road segments, find a set of images in $\mathcal{I}_{n}$ that obtain that with minimum cardinality. Formally, 
\begin{equation}
\min\limits_{\mathcal{X} \subseteq I_{n}} w(\mathcal{X}) \quad \text{subject to} \quad \min\limits_{e \in E}  \gamma^{e}(\mathcal{X},\mathcal{I}_{o},T)  \ge \lambda\; .
\label{eq:lambda}
\end{equation}
A binary search of $\lambda \!\in\! [0,1]$ is then applied. More precisely, for each selected $\lambda$, an instance of problem~(\ref{eq:lambda}) should be solved until a good approximate solution to~(\ref{eq:opt2}) is found. It is easy to show that subproblem~(\ref{eq:lambda}) can be reduced to a submodular set covering problem similar to problem~(\ref{eq:opt1}), which is again amenable to efficient approximation. Specifically, given $\lambda$ we may define the following set function:
\begin{equation}
\phi(\mathcal{X}) = \sum_{e\in E} \min \{\gamma^{e}(\mathcal{X},\mathcal{I}_{o},T),\lambda \} \; .
\label{eq:phi}
\end{equation}
We note that $\phi(\mathcal{X})$ is also a submodular function since: $i)$ $\min \{\gamma^{e}(\mathcal{X},\mathcal{I}_{o},T),\lambda \}$ as a set function on $\mathcal{I}_{n}$ is submodular when $\gamma^{e}(\mathcal{X},\mathcal{I}_{o},T)$ is submodular~\cite{NemhauserWF78} and $ii)$ the sum of submodular functions is submodular. Note that a subset $\mathcal{X} \subseteq \mathcal{I}_{n}$ is a feasible solution to~(\ref{eq:lambda}) if and only if $\phi(\mathcal{X}) \!=\! |\mathcal{I}_{n}| \lambda \!=\! \phi(\mathcal{I}_{n})$ because $i)$ a solution to~(\ref{eq:lambda}) implies that all road segments have a coverage not smaller than $\lambda$, and $ii)$ the coverage obtained with images in $\mathcal{I}_{n}$ cannot be smaller than the one obtained with a subset of $\mathcal{I}_{n}$. Therefore, problem~(\ref{eq:phi}) can be reformulated as follows:
\begin{equation}
\min\limits_{\mathcal{X} \subseteq I_{n}} w(\mathcal{X}) \quad \text{subject to} \quad \phi(\mathcal{X}) = \phi(\mathcal{I}_{n}) \; .
\label{eq:lambda_new}
\end{equation}
Now we can solve OPT2 by solving~(\ref{eq:lambda_new}). Specifically, let $\mathcal{X}(\lambda)$ denote the subset $\mathcal{X} \subseteq \mathcal{I}_{n}$ that achieves $\lambda$ as computed by applying a greedy heuristic similar to Algorithm~\ref{algo:greedy} to~(\ref{eq:lambda_new}). Let us start with $\lambda \!=\! \min_{e \in E} \gamma^{e}(\mathcal{I}_{n},\mathcal{I}_{o},T) $. If $w(\mathcal{X}(\lambda))>k$ a lower $\lambda$ is selected at the next search step\footnote{Note that all road segment with $\gamma^{e}(\mathcal{I}_{n},\mathcal{I}_{o},T) \!=\! 0$ are excluded because they cannot contribute to the road network coverage.}; otherwise, a higher $\lambda$ is selected. This binary search continues until $w(\mathcal{X}(\lambda)) \le k$ and $w(\mathcal{X}(\lambda^{\prime})) > k$ for any $\lambda^{\prime} : \lambda^{\prime} -\lambda \ge \mu$, where $\mu$ can be adjusted to control the accuracy\footnote{To obtain the results shown in Section~\ref{sec:performance} we use $\mu \!=\! 0.01$.}. Given a capacity constraint $k$, such a binary search finds a subset $\mathcal{X} \subseteq \mathcal{I}_{n}$ that has a cardinality lower than $k$ and produces a good approximated solution to OPT2~\cite{KrauseMGG08}. For brevity, hereafter we denote as \textsc{GreedyII} the approximated solution of problem OPT2 that is provided by the binary search described above.
%
%
%
%
\subsection{Practical issues\label{sec:practical}}
\noindent
The assess the feasibility of the \textsc{GreedyI} and \textsc{GreedyII} schemes first of all we should evaluate their computational complexity. If we consider Algorithm~\ref{algo:greedy} it is easy to observe that it requires at most $O(|\mathcal{I}_{n}|)$ iterations (line~\ref{ln:while} to line~\ref{ln:endwhile}) where each iteration involves $|\mathcal{I}_{n}|$ evaluations of $\gamma(\mathcal{X},\mathcal{I}_{o},T)$. On the other hand, the computation of $\gamma(\mathcal{X},\mathcal{I}_{o},T)$ may be quite time consuming for a large road network (i.e., many road segments), and large $|\mathcal{I}_{n}|$ and $|\mathcal{I}_{o}|$. In the case of \textsc{GreedyII} algorithm we should also consider the complexity due to the binary search. More precisely, the greedy heuristic should be recursively executed on a number of different $\lambda$ values, which is equal to $O(\log(\min_{e \in E} \gamma^{e}(\mathcal{I}_{n},\mathcal{I}_{o},T ) / \mu))$. Again this can be time consuming in large systems. Another design choice that may affect the practicality of an optimisation-based approach for addressing the road network coverage is the number of times problem OPT1 (or OPT2) should be solved. In principle, the data collector could invoke the \textsc{GreedyI} (or \textsc{GreedyII}) schemes every time a vehicle gets associated to a roadside AP. In dense urban environments those events may occur quite frequently. Thus, in a real-world system it may be reasonable to set a minimum time interval between two consecutive executions of the greedy algorithms. 

We conclude this section by observing that it could be possible to extend the formulation of the optimisation problem by taking into account additional requirements. For instance, additional information on the local status of each roadside AP (e.g., channel state or bandwidth utilisation) or about the mobility patterns (e.g., popularity of travelled routes) could be leveraged to improve the performance of our algorithm. The downside is that the amount of signalling traffic that needs to be delivered to the remote data collector may increase excessively.  
%
%
%
%
%
%
\section{Centralised versus localised image selection\label{sec:decentralized_solution}}
\noindent
Motivated by the considerations illustrated in Section~\ref{sec:practical}, we now describe a simpler data collection scheme. The main features of this scheme can be summarised as follows: $i)$ it is executed directly on each roadside AP rather than on a single centralised data collector, $ii)$ it operates on longer time scales than \textsc{GreedyI} and \textsc{GreedyII}, and $iii)$ it relies on basic aggregate information on spatio-temporal distributions of received images. Specifically, let us assume that every $S$ seconds the data collector measures, for each road segment $e_{i,j} \!\in\!E$ between intersections $i,j \!\in\! V$, the number $r_{i,j}$ of \emph{redundant images} for that road segment received during that time interval. As an example, two images can be regarded as redundant if more than 50\% of the road scene they cover overlaps (see Figure~\ref{fig:temporal_dist}). Then, the data collector can assign to road segment $e_{i,j}$ a probability $p_{i,j} \!\in\! [0,1]$, defined as the probability of requesting images for that road segment in the following $S$ seconds (how to compute $p_{i,j}$ is explained later on). After that, the data collector distributes the matrix $P \!=\! \{p_{i,j}\}$ to all roadside APs in its area. When a roadside AP receives a data summary from a newly associated vehicle, it uses $p_{i,j}$ values to probabilistically decide which images to request. Intuitively, if in the previous period the data collector has received many redundant images for a road segment $e_{i,j}$ (e.g., because that road segment is in frequently traveled routes) then the probability $p_{i,j}$ should be low. On the contrary, if only few redundant images were received for that road segment then probability $p_{i,j}$ should be high. It is also straightforward to note that network conditions are dynamic (e.g., vehicular traffic profiles change during the day), thus probability $p_{i,j}$ should be continuously adapted. 

\begin{algorithm}[!tbh]
\small
\caption{Controlling the probability $p_{i,j}$ or requesting images for road segment $e_{i,j} \!\in\!E$}
\label{algo:prob}

\begin{algorithmic}[1]                 
\REQUIRE $G(V,N)$, $\delta p$, current $P\!=\! \{p_{i,j}\}$, $r_{H}$, $r_{L}$
\ENSURE updated $P\!=\! \{p_{i,j}\}$ 
\FORALL{$e_{i,j}$ in $E$}
\STATE  $r_{i,j} \leftarrow $ number of received images that covered $e_{i,j}$ in previous $T$ seconds
\IF{ $time=0$}
\STATE $p_{i,j} \leftarrow 1$
\ELSE 
\IF{ $r_{i,j} \ge r_{H}$} 
\STATE $p_{i,j} \leftarrow \max ( \delta p,p_{i,j} - \delta p )$
\ELSIF{ $r_{i,j} \le r_{L}$} 
\STATE $p_{i,j} \leftarrow \min ( 1,p_{i,j} + \delta p )$  
\ENDIF
\ENDIF
\ENDFOR
\end{algorithmic}
\end{algorithm}
Algorithm~\ref{algo:prob} describes how $p_{i,j}$ values are dynamically computed. Initially, all $p_{i,j}$ values are set to one because the data collector has no information on the levels of data redundancy. Then, we implement a \emph{threshold mechanism with hysteresis} to adjust $p_{i,j}$ values at the end of each one of the control period (that last $S$~seconds). More precisely, if $r_{i,j}$ is above a threshold $r_{H}$ we decrease $p_{i,j}$ by a constant factor $0 \!<\! \delta p \!< \! 1$. Note that we do not allow $p_{i,j}$ to be zero because we must continue to update $r_{i,j}$ estimates. Furthermore, to prevent excessive oscillations in $p_{i,j}$ values, which can cause instability, we apply hysteresis in the $p_{i,j}$ adaptation, and we start increasing $p_{i,j}$ only if $r_{i,j}$ is smaller than $r_{L}$, with $r_{L} \!<\! r_{H}$. Following this approach we believe that in some conditions it may be possible to mitigate data redundancy without requiring complex computations. For brevity, the data collection scheme described above is called \textsc{PDC} (or \textsc{Probabilistic Data Collection}).
%
%
%
%
\section{Performance Evaluation\label{sec:performance}}
\noindent
In the following we compare the performance of the proposed data collation schemes versus a naive solution, in which the vehicles are trying to upload all their stored images. Specifically, with the naive approach the vehicles do not sent data summaries to the remote data collector but they try to upload all their images as soon as they get associated with a roadside AP. Thus, the access to the limited channel capacity is regulated only by the MAC protocol and not by the remote data collector.  
%
%
%
%
%
\subsection{Simulation setup\label{sec:setup}}
\noindent
To simulate an urban vehicular network we have utilised VanetMobSim~\cite{HarriFFB09}, a popular generator of realistic vehicular movement traces, and ns-2~\cite{ns2}, a network simulator commonly used to study multi-hop wireless networks. All the following results have been obtained by using a $10 \!\times\! 10$ grid as road network topology, in which each road segment is 100-meter long. Such grid-like road scenario is commonly used in other works in the field to avoid the biases that could be generated by less regular road layouts and non-uniform traffic~\cite{FioreH08,HsiehW12}. Furthermore, we assume that vehicles randomly select their trips in the road network, and they move according to Intelligent Driver Model with Intersection Management (IDM-IM)~\cite{FioreH08}, which is one of the most common car-following models in traffic flow theory~\cite{FrickerW04}. The parameters of this mobility model are chosen as in~\cite{FioreH08}, and they fit real-world urban mobility traces. 

Regarding the surveillance application we assume that each camera has a depth of field ($\delta$) of 30~meters. For simplicity, we also assume that each camera captures an image from the street every two second and that the image validity ($\tau$) is ten seconds\footnote{The maximum speed in urban environments is typically 50~kmph. Thus, a sampling period of two seconds for cameras that have $\delta \!=\! 30$~meters is largely sufficient to ensure continuous coverage of a road as the vehicle travel.}. In addition, we assume that every image captured by camera has size $800 \!\times\! 600 $~pixels, and it is compressed in JPEG format. Thus, one image can be transferred in one packet of size equal to 1000~bytes. Moreover, compared to image size, the image attributes are much smaller. In our work, image summaries are 40 bytes as in~\cite{LiHYSLW11}. Thus, one 1000-byte packet can convey data summaries for 25 images. 

Regarding the communications, images and their attributes are uploaded to the data aggregator using UDP-based connections, while channel access is regulated by the IEEE 802.11 MAC protocol. If not otherwise stated the nominal data rate is 11~Mbps, and the pathloss is modelled using the Two-ray Ground propagation model with a transmission range of 100~meters. Finally, we assume that four roadside APs are regularly deployed within the road network.

In the following we show results aiming at exploring the effectiveness of the proposed data collection schemes under a variety of scenarios. Specifically, we analyse the impact of $i)$ vehicle density, $ii)$ buffer size, and $iii)$ latency requirements on the protocol performance. All simulations have a duration of five hours. A warm-up interval of 30~minutes is used before collecting steady-state statistics to avoid transient effects. For computing 95\% confidence intervals we replicate each simulation five times. 
%
%
\subsection{Impact of vehicle density\label{sec:density}}
\noindent
The following set of results have been obtained under the assumption that: $a)$ each vehicle has a local storage of 200~images, $b)$ the latency requirement ($T$) for the surveillance task is $300$ seconds (i.e., five minutes), and $c)$ \textsc{PDC} is configured with $r_{L} \!=\! 3$, $r_{H} \!=\! 8$, $\delta_{p} \!=\! 0.1$ and $S \!=\! 60$~seconds (the calibration of \textsc{PDC} parameters is discussed in Section~\ref{sec:calibration}). Furthermore, we assume that 200~vehicles are moving in the road network but only a fraction of them is equipped with a camera and it is contributing to the monitoring task. Specifically, we investigate three scenarios in which $5\%$, $12.5\%$ and $25\%$ of the travelling vehicles have a camera, respectively. The effectiveness of the various schemes is assessed in terms of the quality of the road network coverage, the redundancy of collected images, and the bandwidth utilisation of data and signalling traffic. It is important to observe that in our grid-like simulated scenario a road segment (i.e., a road connecting two intersections) is 100 meter long. Therefore, a single image, which has a maximum depth of field of 30~meters can only partially cover a road segment. To capture more accurately this condition we have subdivided each road segment into \emph{subsegments} of fixed length equal to 10 meters (thus, each road segment is composed of ten subsegments). Then, \emph{we compute all statistics per road subsegment} rather than per road segment. This allows us to obtain a more fine grained and precise representation of how many images are actually covering the different portions of each road.

Figure~\ref{fig:accuracy_n} shows the cumulative distribution function (CDF) of the fraction of the total simulation time during which each road subsegment is covered by at least one image for different numbers ($n$) of vehicles equipped with cameras. Our results indicate that the fraction of road subsegments for which the data collector does not receive any image ranges from $60\%$ (with $n \!=\!10$) to $40\%$ (with $n \!=\!50$) if the naive scheme is used. On the contrary, using \textsc{GreedyI} with $k \!=\! 100$ (which implies that up to $50\%$ of locally stored images can be transferred to the remote data collector) this metric ranges from $42\%$ (with $n \!=\! 10$) to $20\%$ (with $n \!=\! 50$). Furthermore, not only does \textsc{GreedyI} reduce the occurrence of ``holes'' in the reconstructed road scene, but it also ensures a more continuous coverage of the road scene. As expected, \textsc{GreedyI} performance degrades by selecting $k \!=\! 10$ because less images can be transferred to the data collector. From the shown results we can observe that \textsc{GreedyII} slightly improves \textsc{GreedyI} for $k \!=\! 100$ in terms of road subsegments that are covered at least by one image. This is mainly due to the fact that \textsc{GreedyII} gives more importance to road segments with poor coverage. However, \textsc{GreedyII} it is also less robust than \textsc{GreedyI} and its performance rapidly degrades when reducing the $k$ value. This can be explained by observing that \textsc{GreedyII} gives higher priority to road segments that are not well covered. Thus, is $k$ is small in some cases most of the selected images can be see to cover a few disadvantaged road segments. With $k \!=\! 100$ this is less likely. Regarding \textsc{PDC} we can observe that it outperforms the naive scheme but it is not as efficient as \textsc{GreedyI}. We remind that \textsc{PDC} is a simple scheme that allows each roadside AP to request with higher probability images related to road segments that are poorly covered. Thus, the main purpose of introducing PDC is to provide preliminary experience on a probabilistic approach to data redundancy elimination.
\begin{figure}[tb]
\centering
\subfigure[$n = 10$]{\includegraphics[width=0.4\textwidth]{./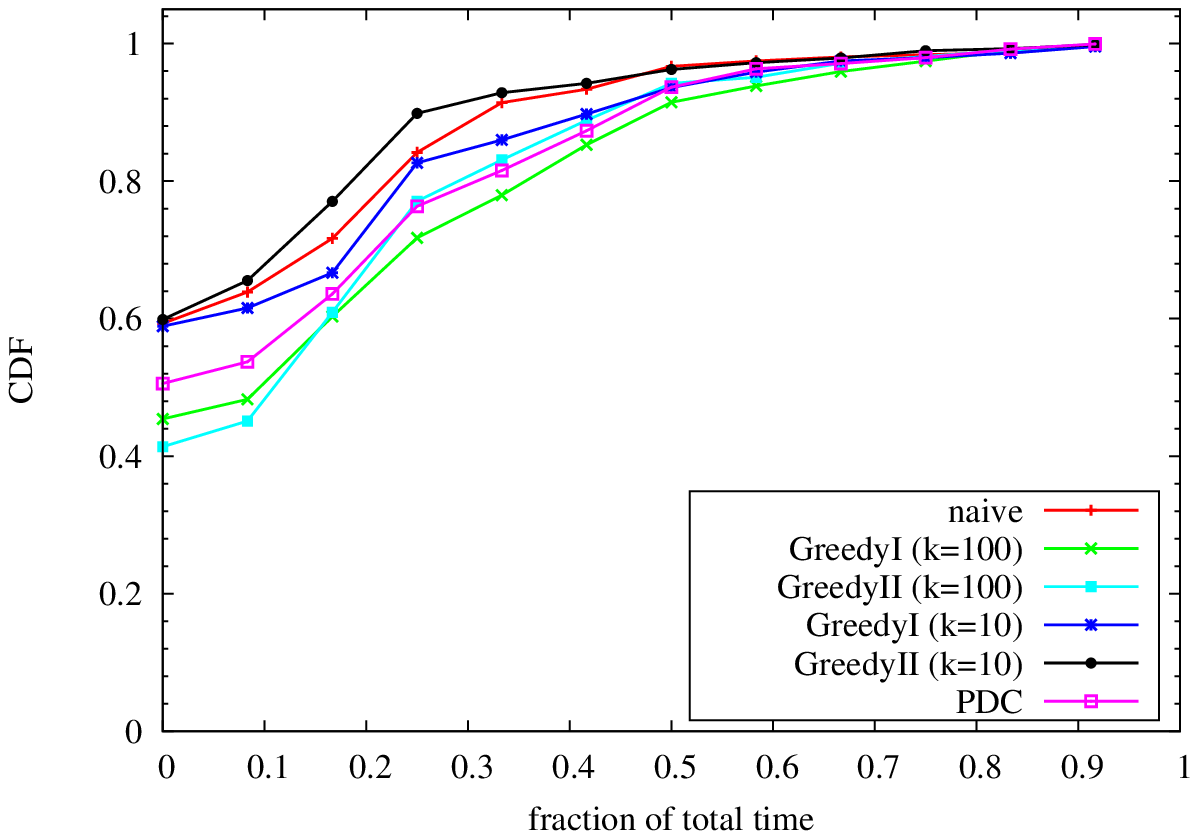}\label{fig:accuracy_n10}}
\subfigure[$n = 25$]{\includegraphics[width=0.4\textwidth]{./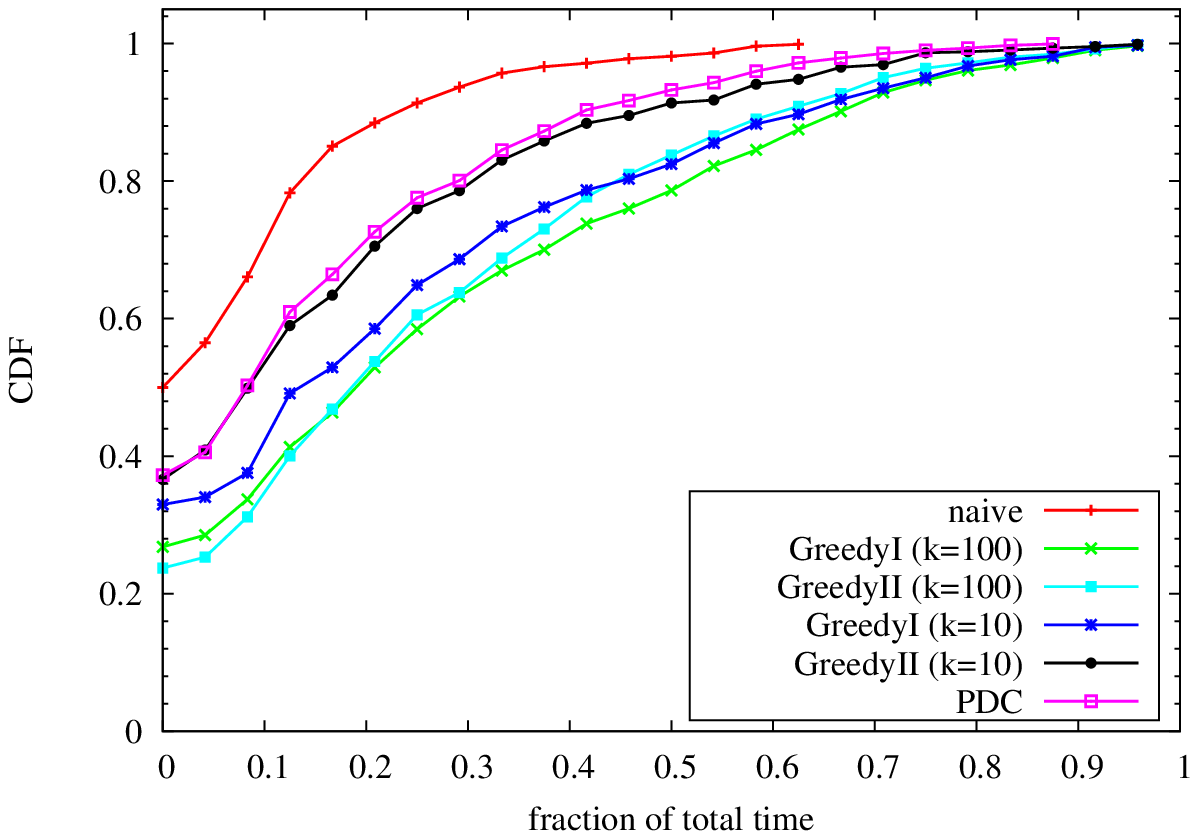}\label{fig:accuracy_n25}}
\subfigure[$n = 50$]{\includegraphics[width=0.4\textwidth]{./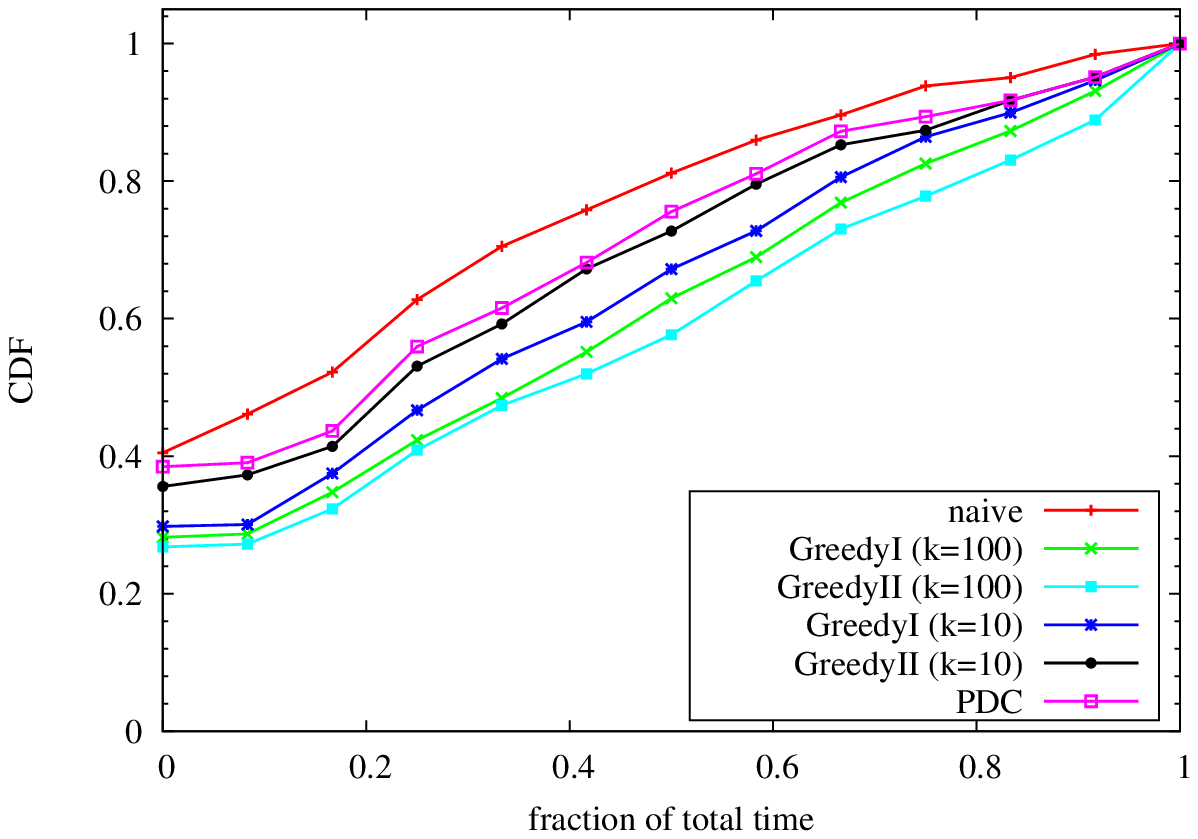}\label{fig:accuracy_n50}}
\caption{CDF of the fraction of total time a road subsegment is covered by at least one image versus the number $n$ of vehicles with cameras.\label{fig:accuracy_n}}
\vspace{-0.2cm}
\end{figure}

It is important to point out that the gain on network road coverage shown in Figure~\ref{fig:accuracy_n} is obtained together with a significant reduction of data redundancy.  To quantify the data redundancy in Figure~\ref{fig:num_cell_n} we show boxplots\footnote{A boxplot is a concise description of a dataset through five values: the bottom and top of the box are the 25$th$ and 90$th$ percentile, respectively, the band in the box is the median (50$th$ percentile) and the ends of the whiskers represent the minimum and maximum values in the dataset.} of the average number of images per road subsegment that are stored at the data collector at any time instant. First of all we can observe that our proposed data collection schemes significantly reduce the maximum data redundancy (up to a factor of four with respect to the naive scheme for $n \!=\! 50$). This is mainly due to the fact that they are able to reduce the number of redundant images that are collected from the most frequently travelled paths. Furthermore, the boxplots indicate that the distribution of the number of collected images is more concentrated towards lower values with our proposed schemes than with the naive approach. Indeed, with our proposed schemes the distribution is generally more concentrated between the median (the band in the box) and the 90$th$ percentile (the top of the box), while with the naive scheme the median is null in most cases (i.e., half of the road subsegments are not covered). Finally, with $k \!=\! 100$ \textsc{GreedyII} provides a slightly higher redundancy than \textsc{GreedyI}. 

It is important to point out that not only is the maximum number of images per road subsegment lower with our schemes than with naive scheme, but we also achieve a \emph{more balanced coverage of the road network}. To clarify this concept, Figure~\ref{fig:redundancy_spatial} plots isoline maps of the spatial distribution of the average number of images per road subsegment over the simulated grid-like road network for $n \!=\! 25$ and different data collection schemes\footnote{Note that all graphs have been obtained by using the same mobility trace.}. Our results indicate that in the naive case there is a high concentration of images at road intersections where image data redundancy is higher because vehicles move slower. On the contrary, with both \textsc{GreedyI} and \textsc{GreedyII} the distribution of the number of images per road subsegment is more uniform and with lower peaks. Qualitatively similar results are obtain with PDC. It is also interesting to observe that the areas with higher values correspond to the places where the four roadside APs are deployed.   
\begin{figure}[tb]
\centering
\includegraphics[width=0.5\textwidth,clip=true,angle=-90]{./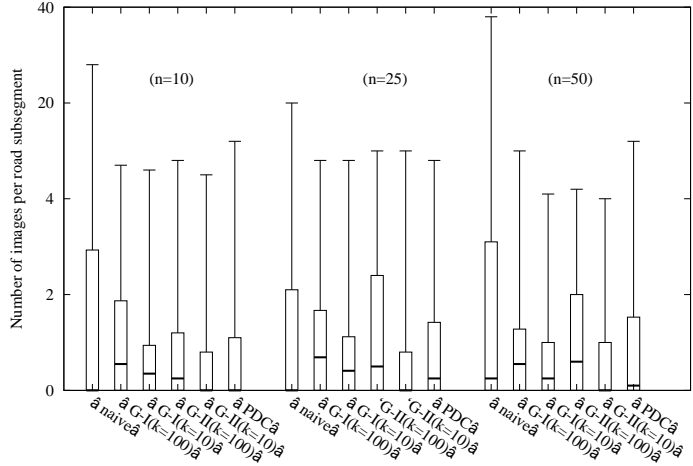}
\caption{Number of images per road subsegment versus the number $n$ of vehicles with cameras}
\label{fig:num_cell_n}
\end{figure}
\begin{figure}[tb]
\centering
\subfigure[naive]{\includegraphics[width=0.4\textwidth]{./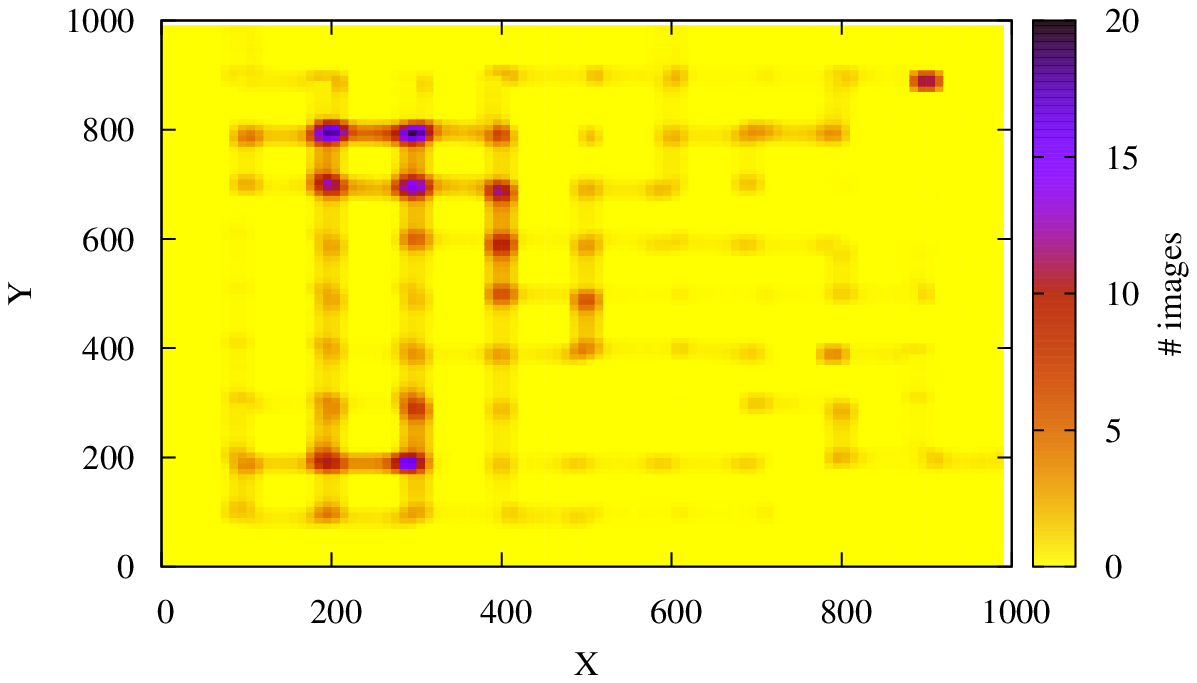}\label{fig:redundancy_base}}
\subfigure[\textsc{GreedyI} ($k=100$)]{\includegraphics[width=0.4\textwidth]{./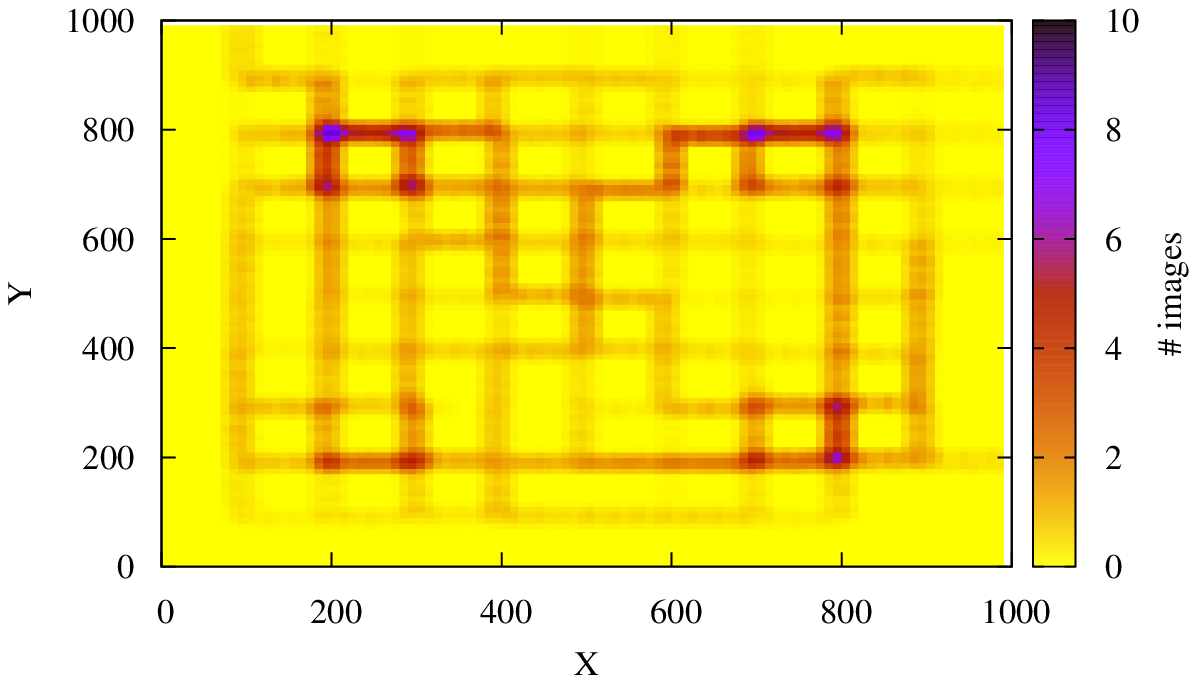}\label{fig:redundancy_greedy}}
\subfigure[\textsc{GreedyII} ($k=100$)]{\includegraphics[width=0.4\textwidth]{./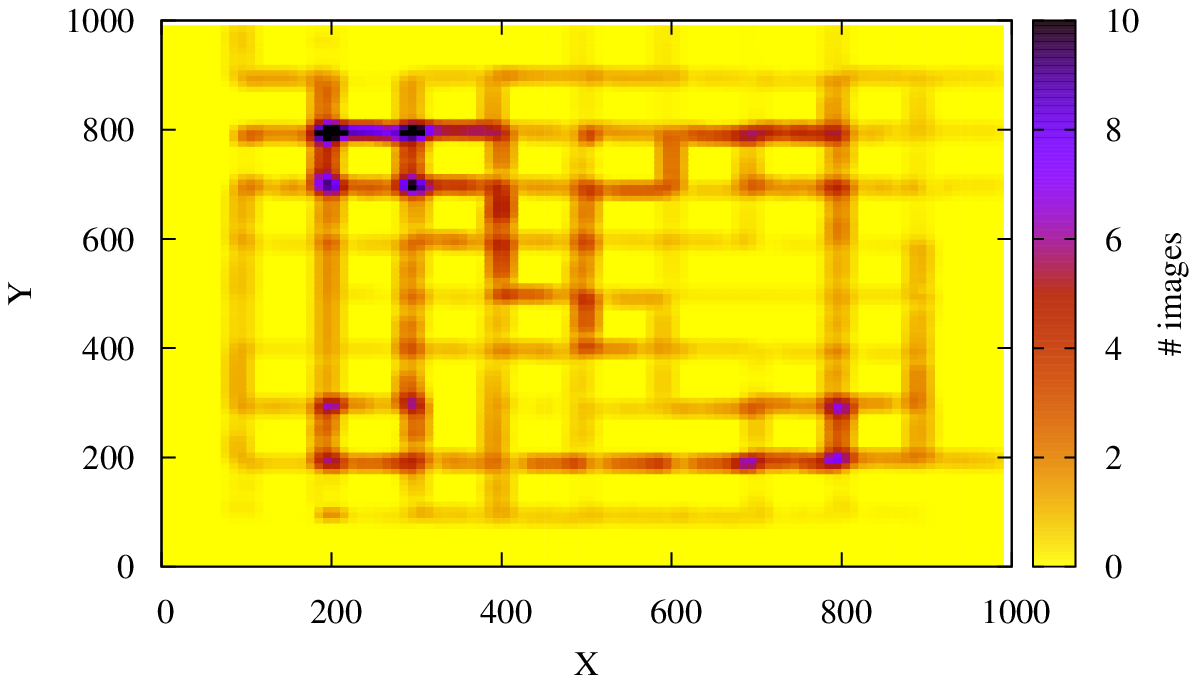}\label{fig:redundancy_maxmim}}
\subfigure[\textsc{PDC}]{\includegraphics[width=0.4\textwidth]{./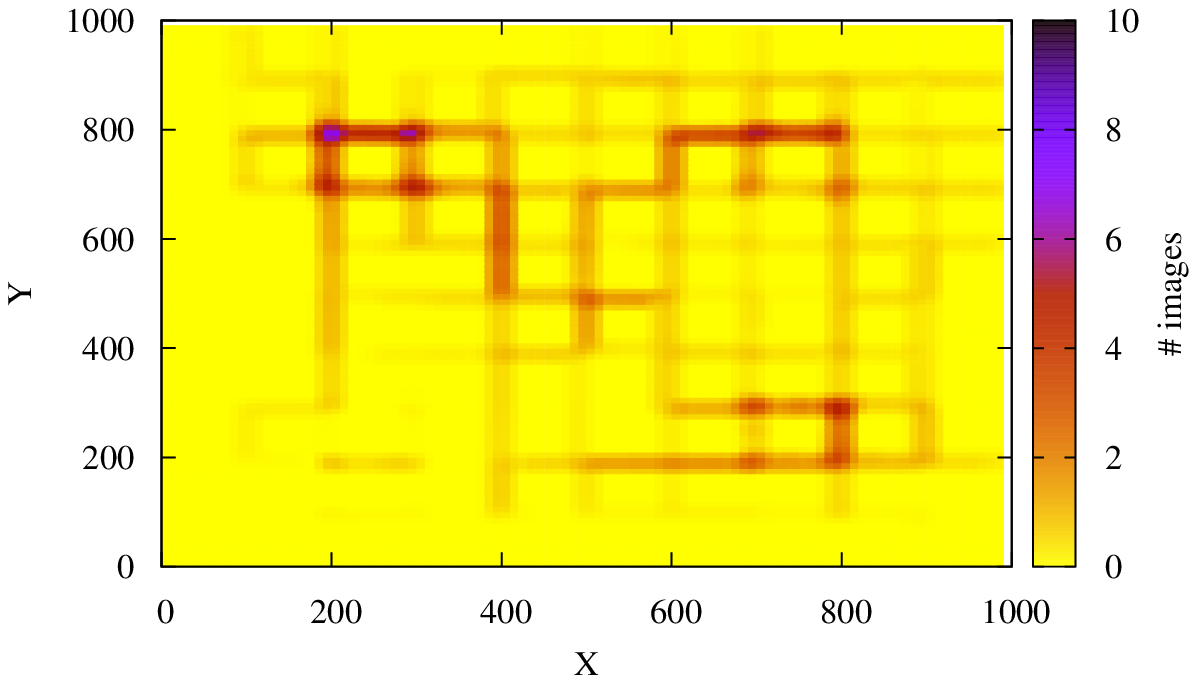}\label{fig:redundancy_pdc}}
\caption{Spatial distribution of the average number of images per road subsegment for $n=25$. \label{fig:redundancy_spatial}}
\vspace{-0.4cm}
\end{figure}

To conclude our performance comparison we evaluate the \emph{efficiency} of each data acquisition scheme in terms of bandwidth consumption. To this end, Figure~\ref{fig:over_data_n} shows the average number of images received at the data collector per minute versus the number $n$ of vehicles equipped with cameras\footnote{We only show the curves for \textsc{GreedyI} since \textsc{GreedyII} obtain similar results.}. As expected the more the vehicles and the higher the number of received images is because associations to roadside APs are more frequent. On the other hand, \textsc{GreedyI} is able to significantly reduce data traffic while improving the coverage of road network because it requests only images that are not redundant. Intuitively, the lower the $k$ value and the lower the number of received images is. Similarly, \textsc{PDC} also reduces the number of transmitted messages, although the extent of this reduction may depend on the thresholds used in PDC to decrease the probability of requesting new images. Finally we also explore protocol overheads (in terms of bytes per minute) due to data summaries and image requests. Specifically, Figure~\ref{fig:over_signalling_n} shows the protocol overheads versus the number $n$ of vehicles equipped with cameras. We remind that an image tag is 40~bytes and data summaries can include tags for at most~200~images (i.e., the size of the local data storage), while the replies list the identifiers of the requested images and they generally consume a few hundreds of bytes at most. Results indicate that all schemes transmit approximately the same amount of data summaries because this value primarily depends on the mobility profiles. On the other hand, the signalling traffic due to image requests decreases with $k$ because this also limits the maximum number of entries into each request.  
\begin{figure}[tb]
\centering
\subfigure[Average number of images received at the data collector per minute\label{fig:over_data_n}]{\includegraphics[width=0.48\textwidth]{./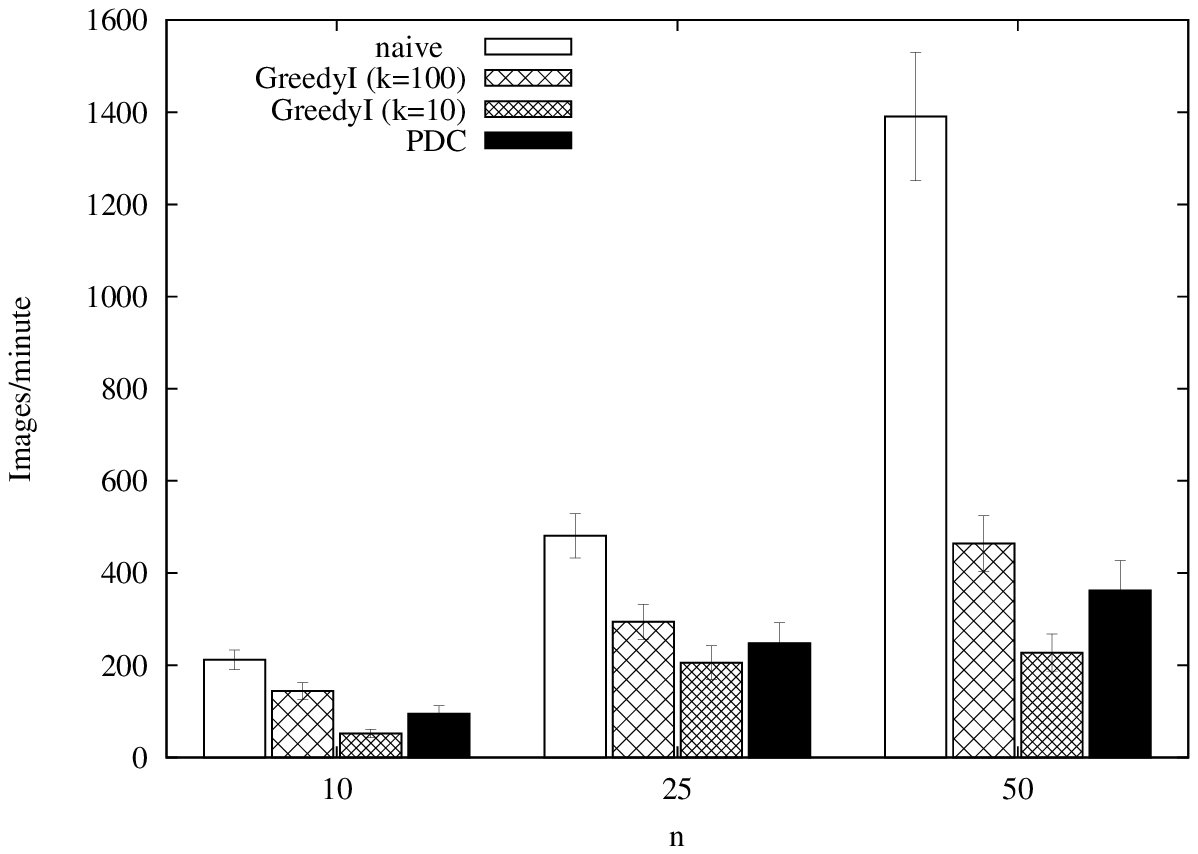}}
\subfigure[Signalling traffic\label{fig:over_signalling_n}]{\includegraphics[width=0.48\textwidth]{./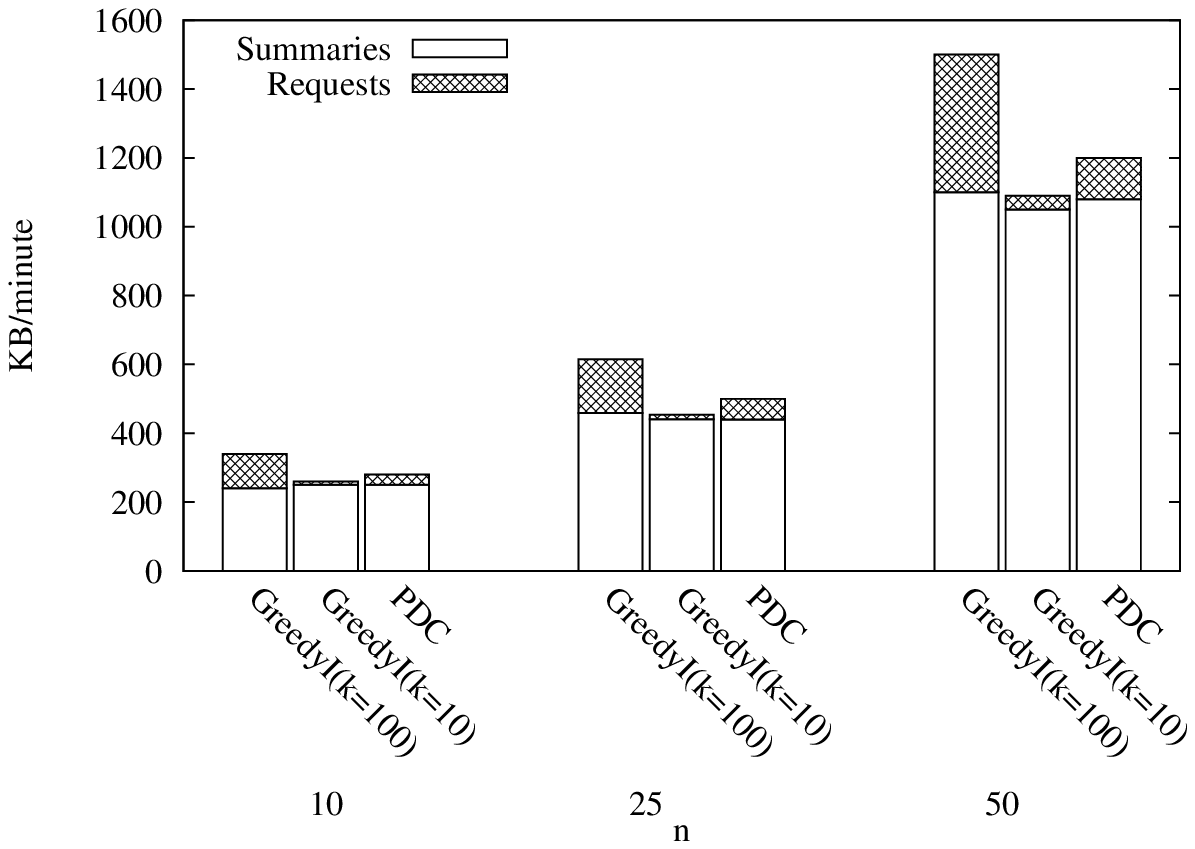}}
\caption{Bandwidth utilisation of signalling and data traffic versus the number $n$ of vehicles equipped with cameras .\label{fig:overhead_n}}
\vspace{-0.3cm}
\end{figure}
%
%
%
%
%
%
\subsection{Impact of buffer size\label{sec:buffer}}
\noindent
It is intuitive to note that a large buffer has the advantage of permitting to keep images also for road segments that are farther from roadside APs. On the other hand the potential gain is limited by the latency constraint imposed by the monitoring task because the oldest images that are stored in the buffer may not satisfy application requirements. In addition, it may be difficult to transfer large buffers due to congestion conditions around the roadside APs. To investigate the interplay between buffer size and system performance in this section we show results for a scenario in which 25~vehicles out of the 200~vehicles that are travelling in the road network are equipped with cameras, and we vary the buffer size. Specifically we investigate two scenarios with buffer equal to 100 and 500 packets respectively (the scenario with buffer equal to 200 is reported in Section~\ref{sec:density}). For the sake of figure clarity we show only results for $k \!=\! 100$ since we have already shown that with $k \!=\! 10$ system performance may degrade excessively (especially for \textsc{GreedyII}).

Figure~\ref{fig:accuracy_buffer} shows the cumulative distribution function (CDF) of the fraction of the total simulation time during which each road subsegment is covered by at least one image for different buffer sizes. Our results indicate that the quality of the reconstructed scene significantly degrades with smaller buffer sizes (i.e., 100 images) for the naive scheme, while larger buffer sizes improve the system performance to some extent. On the other hand both \textsc{GreedyI} and \textsc{GreedyII} are less affected by the buffer size since they always select the $k$ best images that are stored in the buffers. Thus, the main effect of using a larger buffer is to guarantee that the remote controller can select the images to request from a larger set. Finally, \textsc{PDC} outperforms the naive scheme but it is also negatively affected by small buffer sizes, because in this case a more precise selection of useful images is crucial. 
\begin{figure}[tb]
\centering
\subfigure[$B = 100$]{\includegraphics[width=0.4\textwidth]{./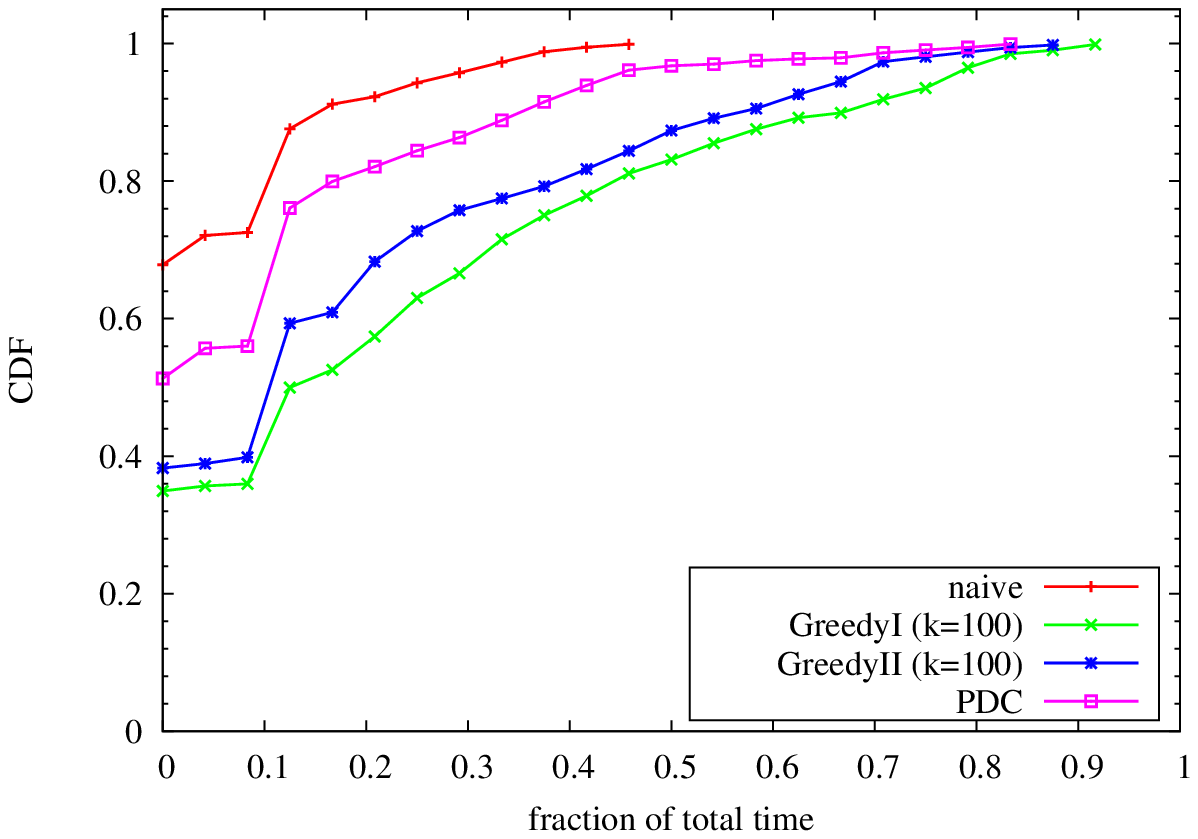}\label{fig:accuracy_buffer100}}
\subfigure[$B = 500$]{\includegraphics[width=0.4\textwidth]{./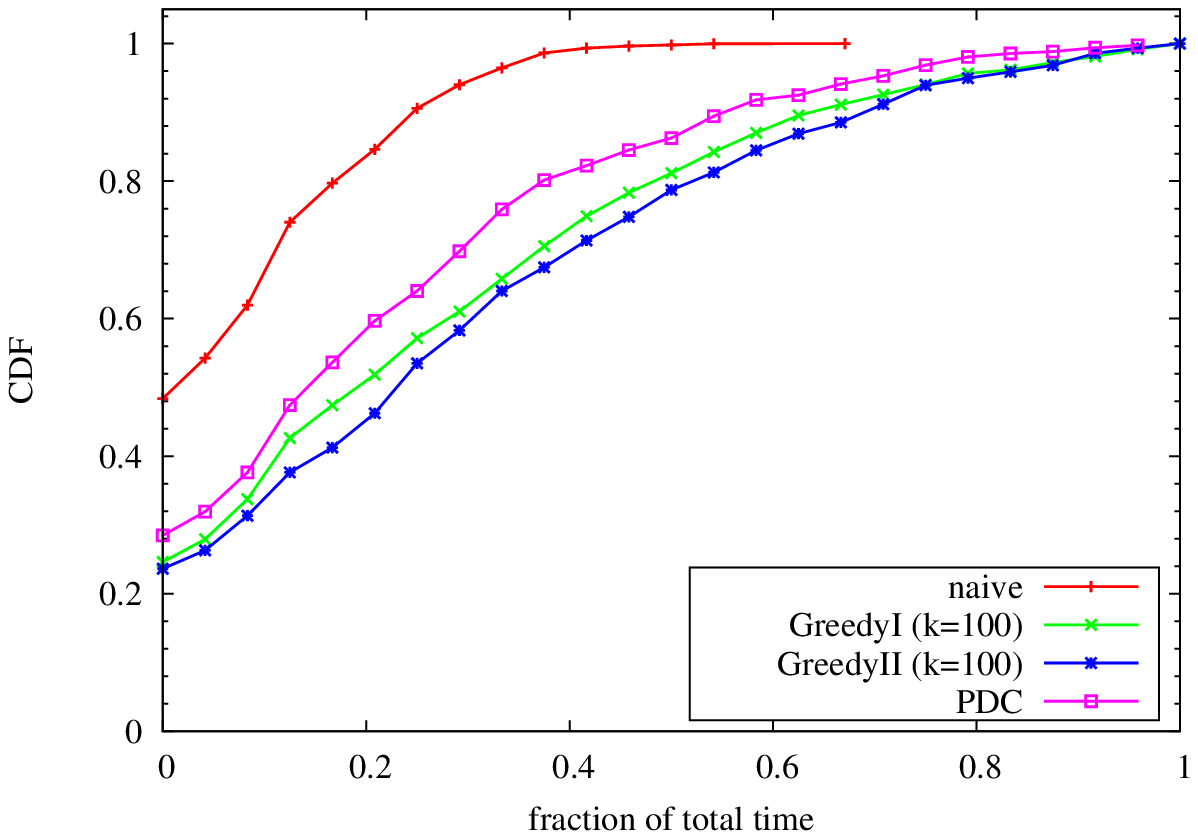}\label{fig:accuracy_buffer500}}
\caption{CDF of the fraction of total time a road subsegment is covered by at least one image versus the buffer size.\label{fig:accuracy_buffer}}
\vspace{-0.2cm}
\end{figure}

In Figure~\ref{fig:num_cell_buffer} we show boxplots of the average number of images per road subsegment that are stored at the data collector at any time instant for different buffer sizes. First we can observe that our proposed schemes significantly reduce the maximum data redundancy. Furthermore, as we have already pointed out in Figure~\ref{fig:num_cell_n} the redundancy distribution is more concentrated between the median (the band in the box) and the 90$th$ percentile (the top of the box), while with the naive scheme the median is always null (i.e., half of the road subsegments are not covered).
\begin{figure}[tb]
\centering
\includegraphics[width=0.5\textwidth,clip=true,angle=-90]{./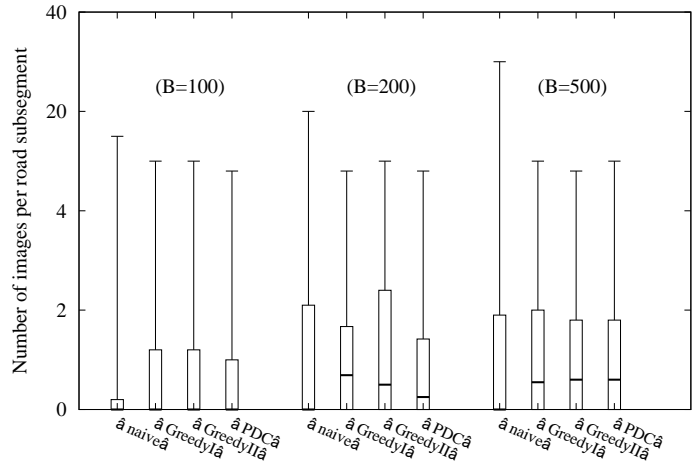}
\caption{Number of images per road subsegment versus the buffer size}
\label{fig:num_cell_buffer}
\end{figure}

Finally, Figure~\ref{fig:overhead_buffer} shows the average number of images received at the data collector per minute for different buffer sizes. Interestingly we observe that the two greedy variants send almost the same number of images, and such number slightly increases with the buffer size. On the other hand, with the naive scheme the number of transferred images increases rapidly with the buffer size, while \textsc{PDC} is inefficient mainly for large buffer sizes. We also explore the protocol overheads in terms of sent data summaries and image requests in Figure~\ref{fig:over_signalling_buffer}. Clearly the plots indicate that all schemes transmit approximately the same amount of data summaries because this parameter mainly depends on the mobility profiles and the buffer occupation. Similarly, the amount of signalling traffic due to image requests is bounded by the parameter $k$ and it slightly decreases with the buffer size.  

\begin{figure}[tb]
\centering
\subfigure[Average number of images received at the data collector per minute\label{fig:over_data_buffer}]{\includegraphics[width=0.48\textwidth]{./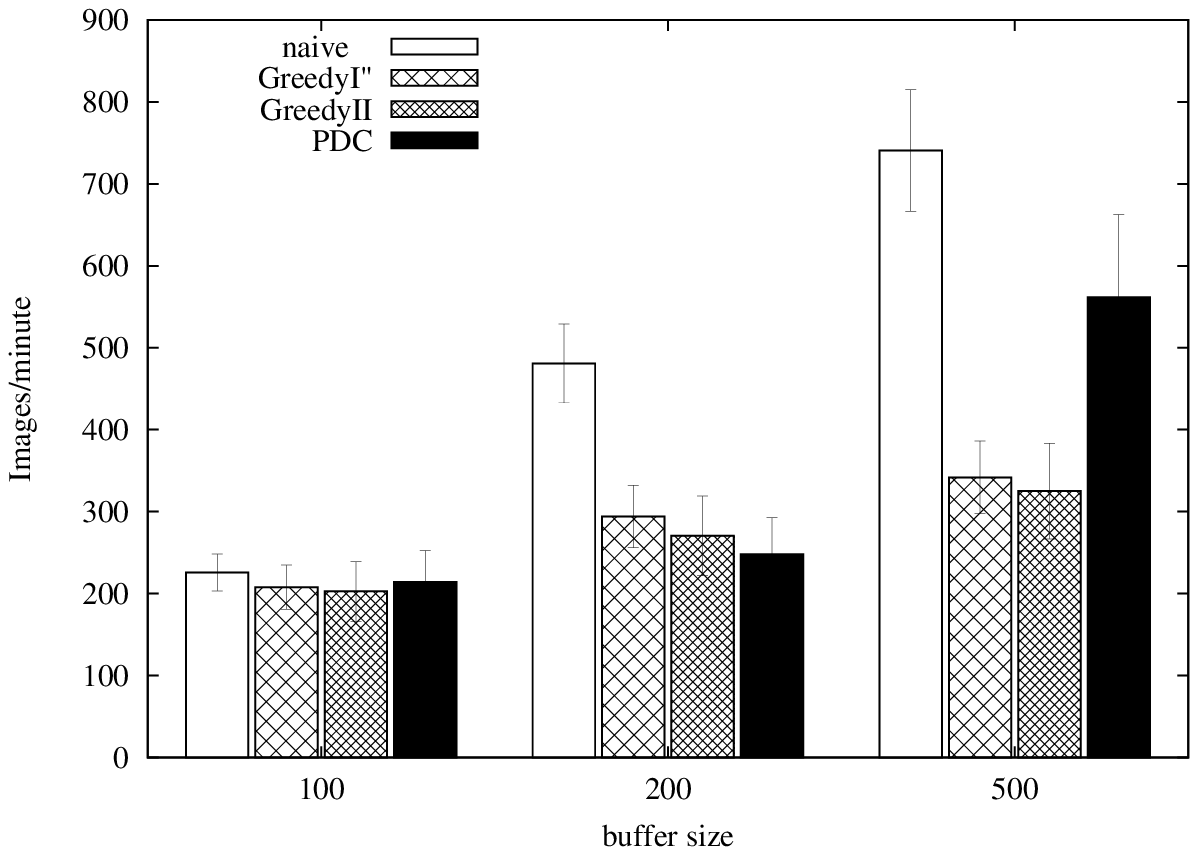}}
\subfigure[Signalling traffic\label{fig:over_signalling_buffer}]{\includegraphics[width=0.48\textwidth]{./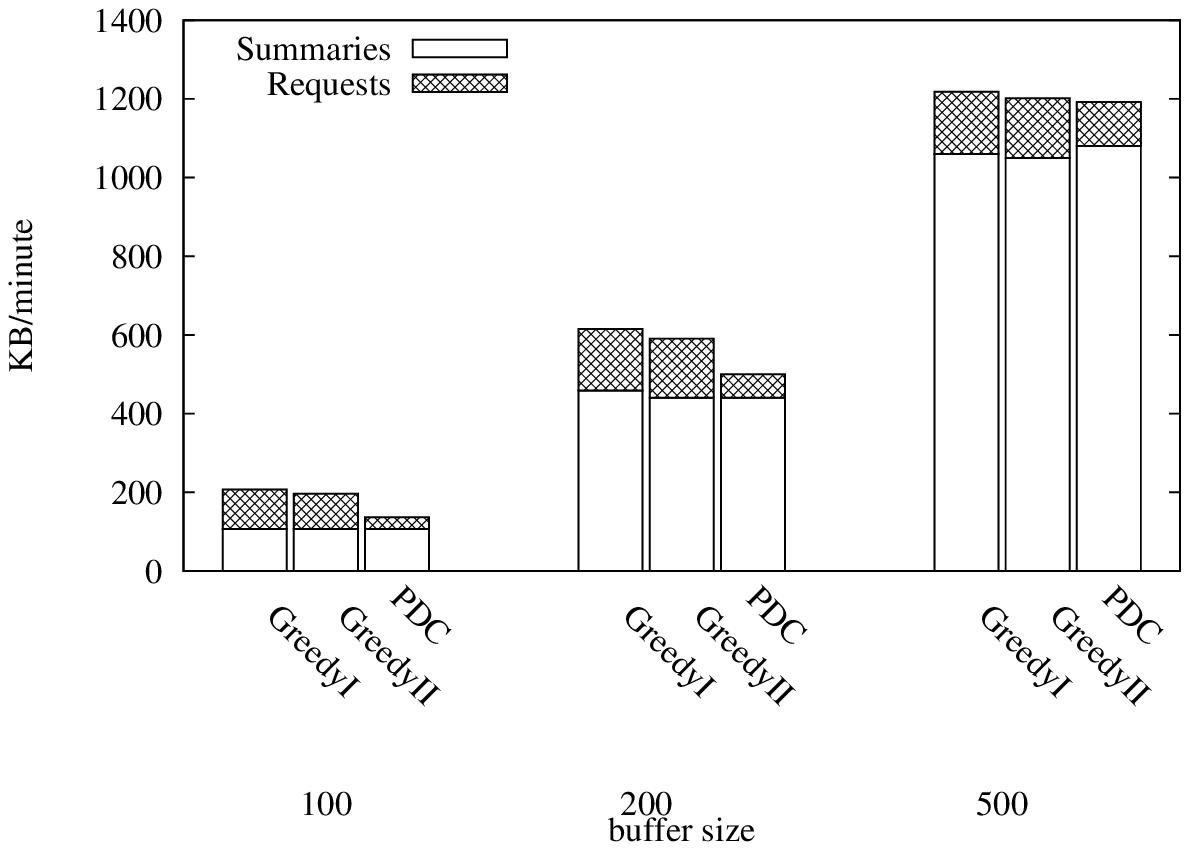}}
\caption{Bandwidth utilisation of signalling and data traffic versus the buffer size\label{fig:overhead_buffer}}
\vspace{-0.3cm}
\end{figure}
%

%
%
%
%
%
\subsection{Impact of latency requirements\label{sec:latency}}
\noindent
Clearly, the latency constraint has a significant impact on the system performance. Specifically, the shorter is the acceptable delay and the more images should be discarded because stale. Similarly, a stringent delay requirement can also make more difficult to obtain a complete coverage of the road network if the wireless roadside infrastructure is not sufficiently dense. To investigate the interplay between latency and system performance in this section we show results for a scenario in which 25~vehicles out of the 200~vehicles that are travelling in the road network are equipped with cameras, and we vary the tolerable latency. Specifically we investigate two scenarios with latency equal to 180 and 600 seconds, respectively (the case with latency equal to 300 is reported in Section~\ref{sec:density}). For the sake of figure clarity we show only results for $k \!=\! 100$.

Figure~\ref{fig:accuracy_latency} shows the cumulative distribution function (CDF) of the fraction of the total simulation time during which each road subsegment is covered by at least one image for different latencies. Our results indicate that there is a significant degradation of road network coverage for $T \!=\! 180$~seconds, and the differences between data collection schemes are less evident. It is not surprising that a more continuos coverage of road network is obtained when the delay requirement is less stringent (i.e., $T \!=\! 600$~seconds) as less images have to be discarded when a vehicle get associated to a roadside AP. Nevertheless, both \textsc{GreedyI} and \textsc{GreedyII} significantly outperforms the naive scheme also for $T \!=\! 600$~seconds because even if all stored images are valid most of them will carry redundant information. Thus, it is still essential to be able to select the images that may contribute the most to fill eventual gaps in the read network scene. The downside is that the large number of potentially redundant images that can be stored in the buffer for $T \!=\! 600$ also leads to an increase in the average number of images per road subsegment that are received by the data collector, as shown in Figure~\ref{fig:num_cell_latency}.
\begin{figure}[tb]
\centering
\subfigure[$T = 180$]{\includegraphics[width=0.4\textwidth]{./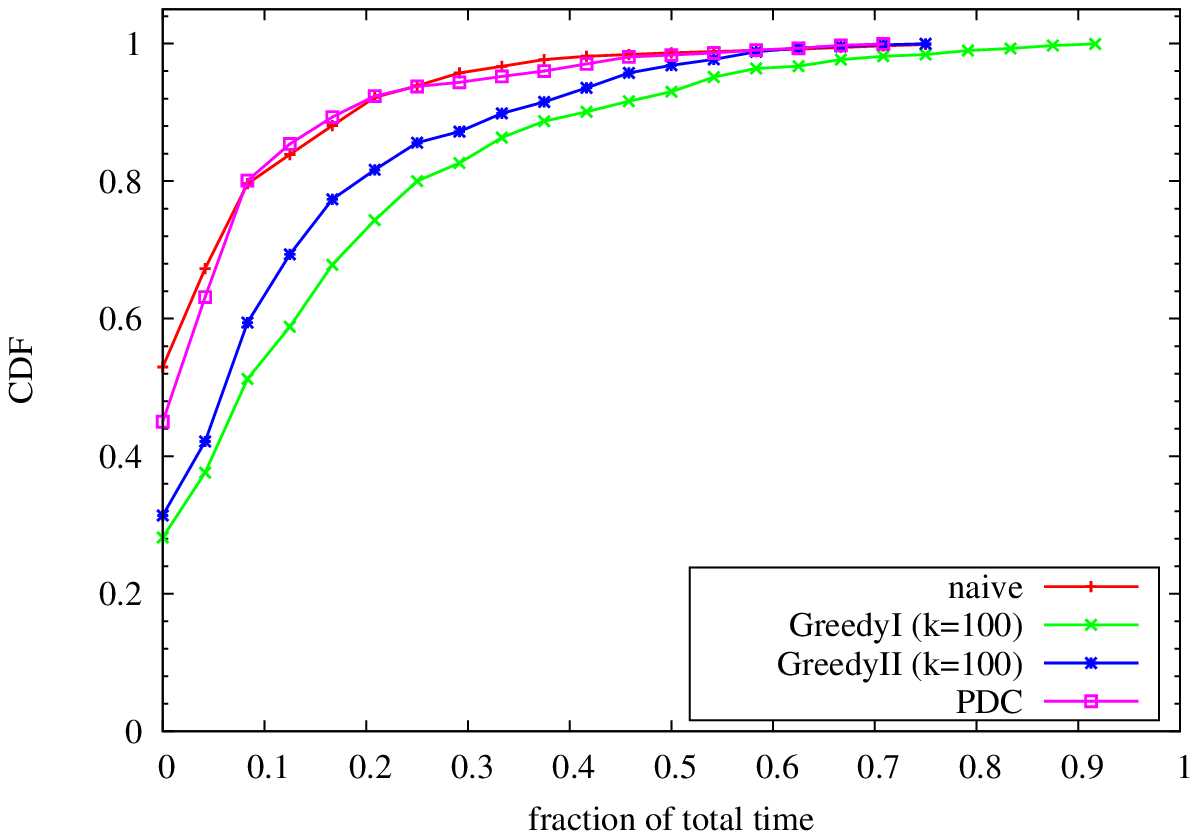}\label{fig:accuracy_latency180}}
\subfigure[$T = 600$]{\includegraphics[width=0.4\textwidth]{./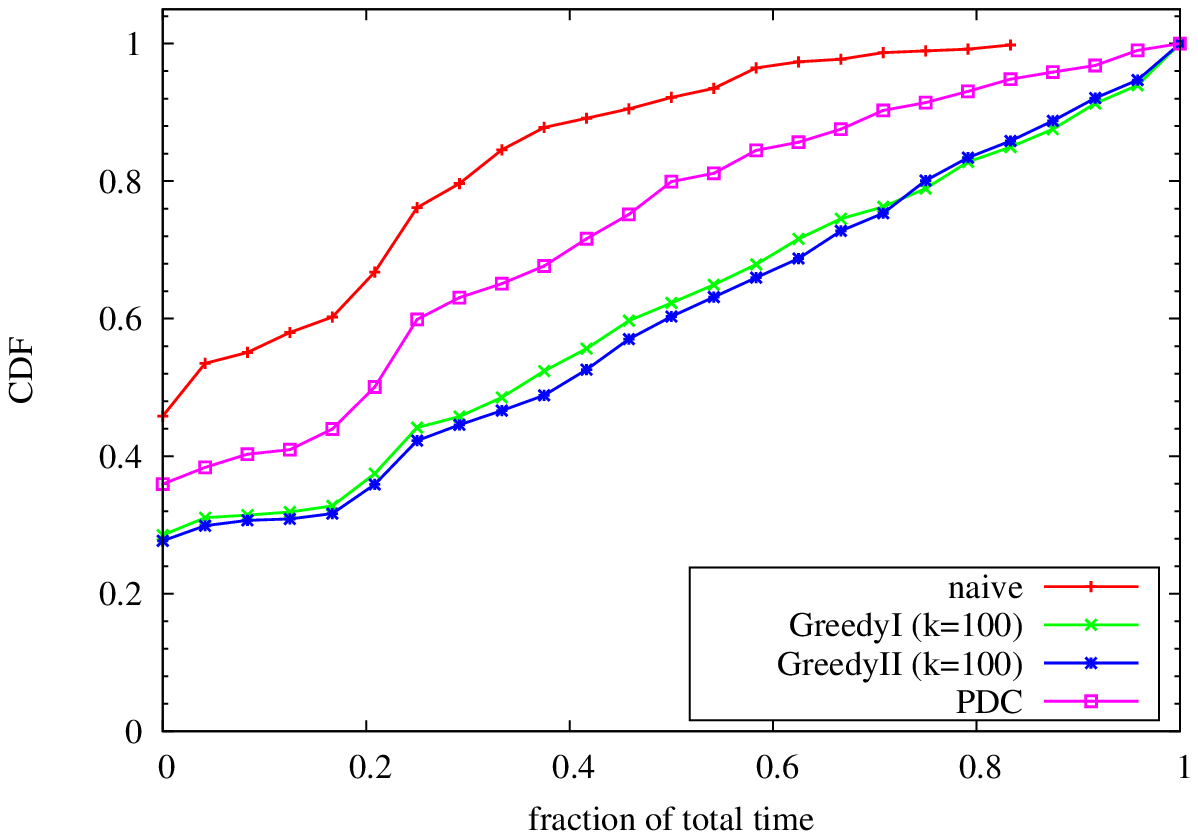}\label{fig:accuracy_latency600}}
\caption{CDF of the fraction of total time a road subsegment is covered by at least one image versus the latency requirement\label{fig:accuracy_latency}}
\vspace{-0.2cm}
\end{figure}
\begin{figure}[tb]
\centering
\includegraphics[width=0.5\textwidth,clip=true,angle=-90]{./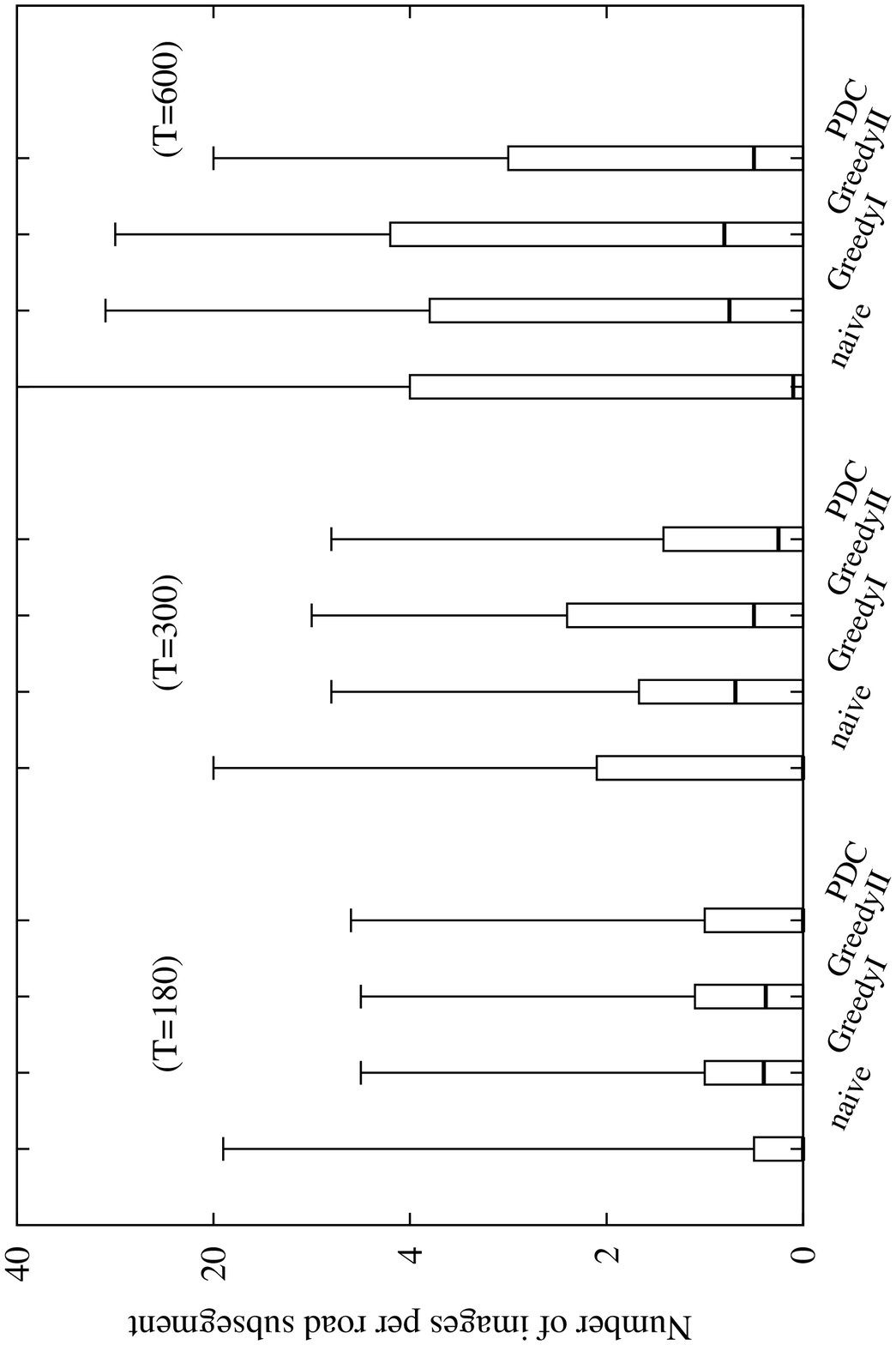}
\caption{Number of images per road subsegment versus the latency requirement}
\label{fig:num_cell_latency}
\end{figure}

Finally, Figure~\ref{fig:over_data_latency} shows the average number of images received at the data collector per minute for different latencies, while Figure~\ref{fig:over_signalling_latency} shows the protocol overheads in terms of sent data summaries and image requests. The figures confirm that one of the main effects of increasing the tolerable latency is that a larger number of images are still valid when a vehicle get associated to a roadside AP, which leads to an increase of the number of transferred images. 
\begin{figure}[tb]
\centering
\subfigure[Average number of images received at the data collector per minute\label{fig:over_data_latency}]{\includegraphics[width=0.48\textwidth]{./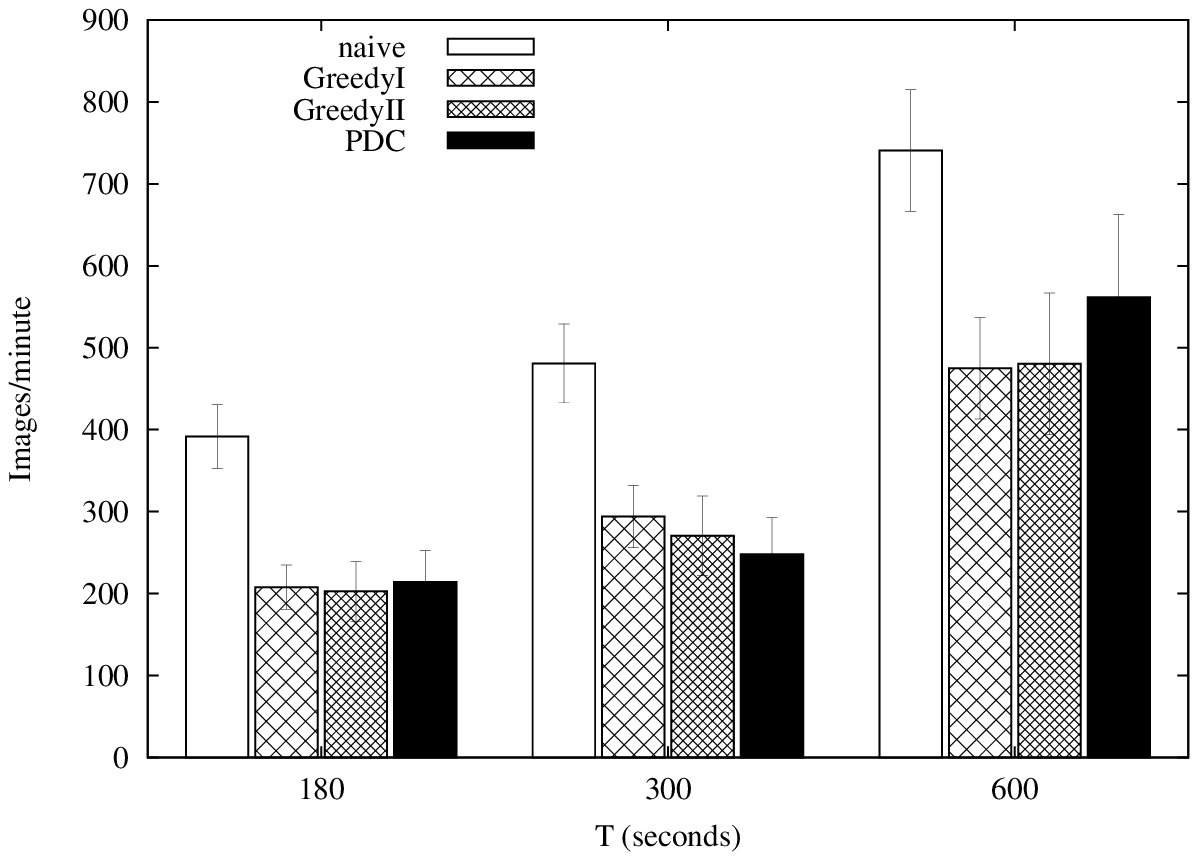}}
\subfigure[Signalling traffic\label{fig:over_signalling_latency}]{\includegraphics[width=0.48\textwidth]{./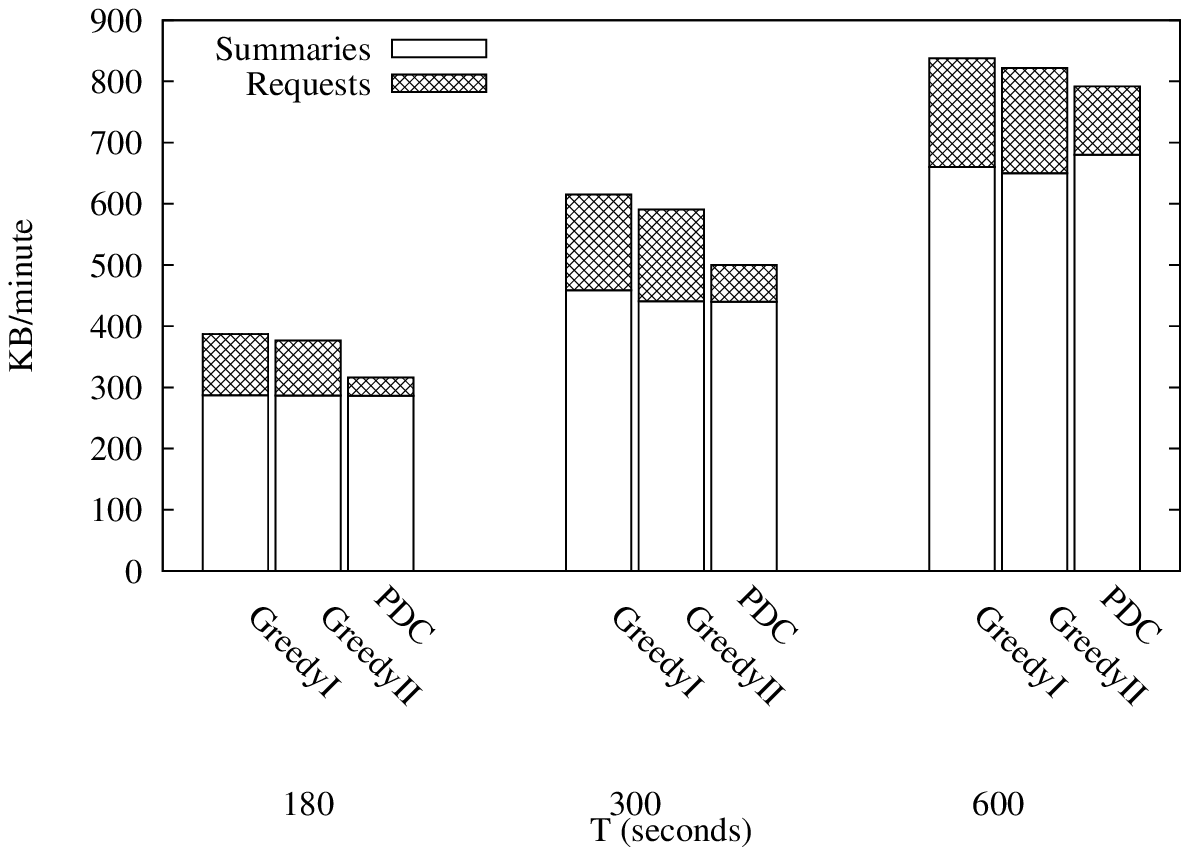}}
\caption{Bandwidth utilisation of signalling and data traffic versus the latency requirement\label{fig:overhead_latency}}
\vspace{-0.3cm}
\end{figure}
%
%
%
%
%
\subsection{Calibration of \textsc{PDC} parameters\label{sec:calibration}}
\noindent
The behaviour of \textsc{PDC} depends on three sets of parameters that determine: $i)$ how fast \textsc{PDC} updates the statistics on redundant images (i.e., $S$ parameter), $ii)$ how fast \textsc{PDC} explores different probabilities of requesting images of a given segment (i.e., $\delta p$), and $iii)$ which is the level of redundancy that should be accepted before changing the probability of requesting images of a given segment  (i.e., $r_{L}$ and $r_{H}$). Typically, the spatio-temporal distribution of received images is expected to change relatively slowly in normal conditions. Thus, it is enough to set $S < T$ as a reasonable trade-off between responsiveness and computational overheads. Similarly, a fine-grained exploration of the range $[0,1]$ does not provide a noticeable performance gain. On the contrary, if we select the $r_{H}$ parameter in a conservative manner we may be unable to increase the probability of requesting images. Similarly, if we select the $r_{L}$ parameter in an overoptimistic manner we may be unable to decrease the probability of requesting images. To clarify this concept in Figure~\ref{fig:pdc_calibration} we show the cumulative distribution function (CDF) of the fraction of the total simulation time during which each road subsegment is covered by at least one image for different \textsc{PDC} variants in the scenario with $n \!=\! 25$. The results indicate that the worse performance are obtained with low values of $r_{H}$ (i.e., $r_{H} \!=\! 3$) and when $r_{L}$ is close to $r_{H}$ (e.g., $r_{L} \!=\! 2$ and $r_{H} \!=\! 3$). This suggests that some degree of redundancy should be accepted to avoid that \textsc{PDC} discards too many images, including images covering road segments that are still uncovered. 
\begin{figure}[tb]
\centering
\includegraphics[width=0.7\textwidth,clip=true,angle=0]{./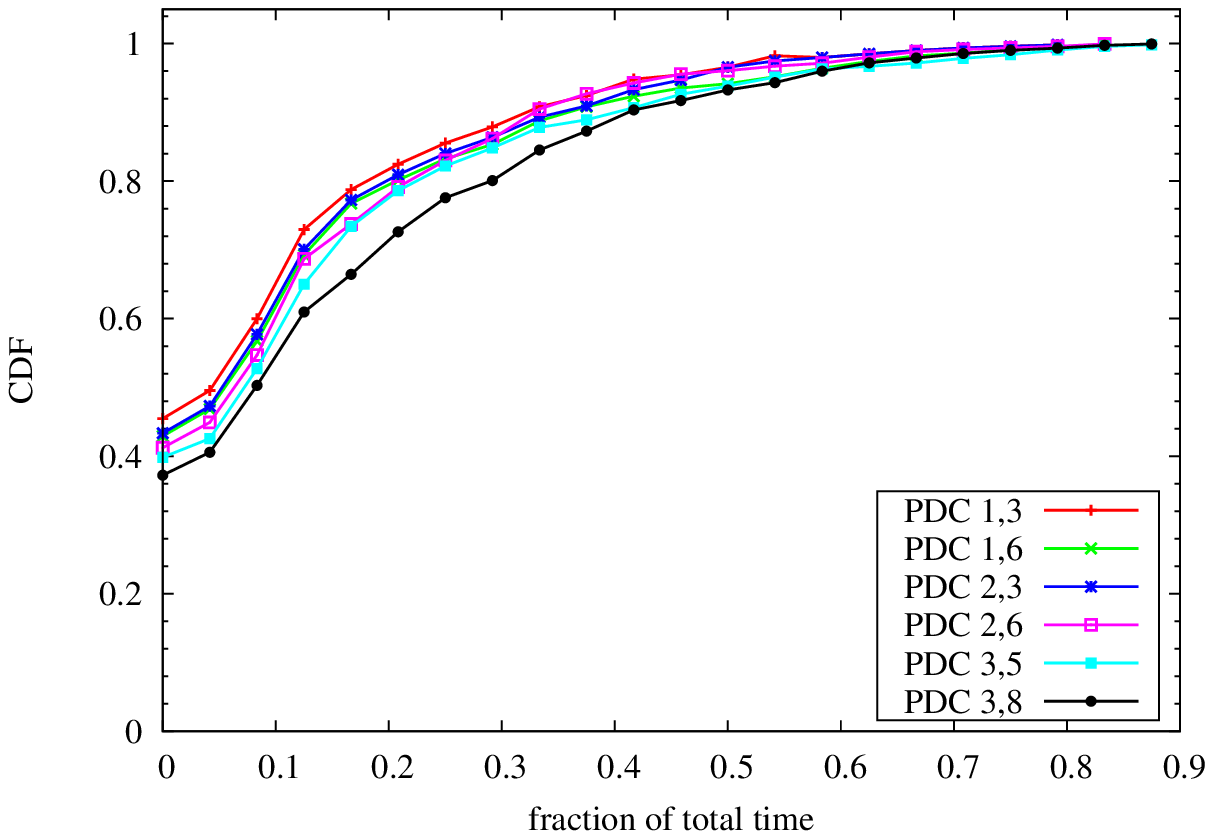}
\caption{Impact of $r_{L}$ and $r_{H}$ values on \textsc{PDC} performance for $n \!=\! 25$, $T \!=\! 300 seconds$ and $B \!=\! 200$~packets}
\label{fig:pdc_calibration}
\end{figure}
%

%
%
%
%
\section{Conclusions\label{sec:conclusions}}
\noindent
In this paper we have discussed which are the major challenges for leveraging on mobile cameras carried by vehicles to develop urban monitoring applications. In particular, we have pointed out that a critical issue is to minimise the number of camera snapshots that needs to be transferred from the vehicles without degrading the quality of the reconstructed road scene and violating the latency requirements that are imposed by the monitoring application. To tackle this problem we have formulated the data collection problem under network capacity constraints as a class of sub-modular set covering problems, whose solution can be approximated through efficient greedy heuristics. We also proposed a simpler scheme that operates in a more decentralized manner using basic aggregate information on the spatio-temporal distribution of received images. We carried out an in-depth performance comparison of those solutions using realistic vehicular mobility patterns. Results obtained using realistic vehicular mobility patterns in a wide range of different scenarios show that our data collection techniques ensure a more accurate coverage of the road network while significantly reducing the amount of transferred data. Finally, we note that the system  architecture proposed in this paper is amenable to a number of possible enhancements, such as more sophisticated buffer management techniques or the use of cooperative storage techniques and mechanisms for the suppression of replicated images.
%







\bibliographystyle{elsarticle-num}



\end{document}